\newcommand{\titel}{Conditional logic as a short-circuit logic} 
\newcommand{\si}{s}
\newcommand{\BF}{\ensuremath{\textit{BF}_A}}
\newcommand{\BFu}{\ensuremath{\textit{BF}_A^{\:\und}}}
\newcommand{\mBF}{\ensuremath{\textit{MBF}_A}}
\newcommand{\mBFu}{\ensuremath{\textit{MBF}_A^{\,\und}}}
\newcommand{\baf}{\ensuremath{\mathit{bf}}}
\newcommand{\memf}{\ensuremath{\mathit{mf}}}
\newcommand{\membf}{\ensuremath{\mathit{mbf}}}
\newcommand{\qedex}{\textit{End~example.}}
\newcommand{\axname}[1]{\textup{\ensuremath{\textrm{#1}}}}
\newcommand{\SCL}{\axname{SCL}}
\newcommand{\export}{\mathbin{\setlength{\unitlength}{1ex}
     \begin{picture}(2.0,1.8)(-.8,0)
     \put(-.5,1.6){\line(1,0){1.4}}
     \put(-.5,-0.2){\line(1,0){1.4}}
     \put(-.44,-0.2){\line(0,1){1.8}}
     \put(.84,-0.2){\line(0,1){1.8}}
     \end{picture}
     }}
\newcommand{\CP}{\axname{CP}}
\newcommand{\CPu}{\ensuremath{\CP^\und}}
\newcommand{\CL}{\axname{CL}}
\newcommand{\FSCL}{\axname{FSCL}}
\newcommand{\MSCL}{\axname{MSCL}}
\newcommand{\SSCL}{\axname{SSCL}}
\newcommand{\CLSCL}{\axname{C$\ell$SCL}}
\newcommand{\CLSCLtwo}{\axname{C$\ell$SCL$_2$}}
\newcommand{\CLSCLu}{\ensuremath{\CLSCL}}
\newcommand{\CLtwo}{\ensuremath{\CL_2}} 
\newcommand{\CLthree}{\ensuremath{\CL_3}} 
\newcommand{\CLu}{\ensuremath{\CL^\und}} 
\newcommand{\Le}{\ell}
\newcommand{\Ri}{r}
\newcommand{\ri}{\Ri}
\colorlet{darkblue}{blue!80!black}
\newcommand{\blauw}[1]{\textcolor{darkblue}{#1}}
\newcommand{\sha}{\ensuremath{\widetilde{\alpha}}}
\newcommand{\CPmem}{\ensuremath{\axname{\CP}_{\mem}}}
\newcommand{\CPmemu}{\ensuremath{\CPmem^\und}}
\newcommand{\mem}{\ensuremath{\textit{mem}}}
\newcommand{\CPcond}{\ensuremath{\axname{\CP}_{\cond}}}
\newcommand{\cond}{\ensuremath{\textit{cl}}}
\newcommand{\CPcondu}{\ensuremath{\axname{\CP}_{\cond}^\und}}
\newcommand{\PS}{\ensuremath{{\mathcal{C}_A}}}
\newcommand{\PSu}{\ensuremath{{\mathcal{C}_A^\und}}}
\newcommand{\SP}{\ensuremath{{\mathcal{S}_A}}}
\newcommand{\SPu}{\ensuremath{{\mathcal{S}_A^\und}}}
\newcommand{\SigCP}{\ensuremath{\Sigma_\CP(A)}}
\newcommand{\SigCPu}{\ensuremath{\Sigma_\CP^\und(A)}}
\newcommand{\SigSCL}{\ensuremath{\Sigma_\SCL(A)}}
\newcommand{\SigSCLu}{\ensuremath{\Sigma_\SCL^\und(A)}}
\newcommand{\TCP}{\ensuremath{\mathbb{T}_{{\SigCP},\cal X}}}
\newcommand{\TSCL}{\ensuremath{\mathbb{T}_{{\SigSCL},\cal X}}}
\newcommand{\MSCLe}{\axname{EqMSCL}}
\newcommand{\CLe}{\axname{EqCL}}
\newcommand{\CLeu}{\ensuremath{\CLe^\und}}
\newcommand{\MSCLeu}{\ensuremath{\axname{EqMSCL}^\und}}
\newcommand{\MSCLu}{\ensuremath{\axname{MSCL}^\und}}
\newcommand{\FSCLu}{\ensuremath{\axname{FSCL}^\und}}
\newcommand{\true}{\ensuremath{\textit{true}}}
\newcommand{\false}{\ensuremath{\textit{false}}}
\newcommand{\lef}{\ensuremath{\scalebox{0.78}{\raisebox{.1pt}[0pt][0pt]{$\;\lhd\;$}}}}
\newcommand{\rig}{\ensuremath{\scalebox{0.78}{\raisebox{.1pt}[0pt][0pt]{$\;\rhd\;$}}}}
\newcommand{\tr}{\ensuremath{{\sf T}}}
\newcommand{\fa}{\ensuremath{{\sf F}}}
\newcommand{\und}{\ensuremath{{\sf U}}}
\newcommand{\undefi}{\ensuremath{\textit{undefined}}}
\newtheorem{theorem}{Theorem}[section]
\newtheorem{lemma}[theorem]{Lemma}  
\newtheorem{corollary}[theorem]{Corollary}  
\newtheorem{definition}[theorem]{Definition}  
\theoremstyle{definition}
\newtheorem{example}[theorem]{Example}
\newcommand{\SB}{\ensuremath{\mathbb{S}_3}} 
\newcommand{\SBtwo}{\ensuremath{\mathbb{S}_2}} 
\newcommand{\leftand}{~
     \mathbin{\setlength{\unitlength}{1ex}
     \begin{picture}(1.4,1.8)(-.3,0)
     \put(-.6,0){$\wedge$}
     \put(-.54,-0.2){\textcolor{white}{\circle*{0.6}}}
     \put(-.54,-0.2){\circle{0.6}}
     \end{picture}
     }}
\newcommand{\fulland}{~
     \mathbin{\setlength{\unitlength}{1ex}
     \begin{picture}(1.4,1.8)(-.3,0)
     \put(-.6,0){$\wedge$}
     \put(-.54,-0.2){\circle*{0.6}}
     \end{picture}
     }}
\newcommand{\leftor}{~
     \mathbin{\setlength{\unitlength}{1ex}
     \begin{picture}(1.4,1.8)(-.3,0)
     \put(-.6,0){$\vee$}
     \put(-.54,1.54){\textcolor{white}{\circle*{0.6}}}
     \put(-.54,1.54){\circle{0.6}}
     \end{picture}
     }}
\newcommand{\fullor}{~
     \mathbin{\setlength{\unitlength}{1ex}
     \begin{picture}(1.4,1.8)(-.3,0)
     \put(-.6,0){$\vee$}
     \put(-.54,1.54){\circle*{0.6}}
     \end{picture}
     }}
\newcommand{\fullimp}{~
     \mathbin{\setlength{\unitlength}{1ex}
     \begin{picture}(1.8,1.8)
     \put(-.0,0){$\rightarrow$}
     \put(-.1,0.57){\circle*{0.6}}
     \end{picture}
     ~}}
\renewenvironment{quote}{%
  \list{}{%
    \leftmargin0.2cm   
    \rightmargin\leftmargin
  }
  \item\relax
}
{\endlist}
\begin{document}
\title{\titel}
\date{}

\author{
\small Jan A. Bergstra \& Alban Ponse\\
\small Informatics Institute, University of Amsterdam,\\
\small Science Park 900, 1098 XH, Amsterdam, The Netherlands\\
\small j.a.bergstra@uva.nl 
 \& a.ponse@uva.nl
}

\maketitle

\begin{abstract}
Three-valued conditional logic (CL) is 
defined by Guzmán and Squier (1990), and based on McCarthy's
noncommutative connectives,
axiomatises a short-circuit logic (SCL)
that defines more identities than three-valued MSCL (Memorising SCL, which also has a two-valued variant). 
This follows from the fact that the definable connective that prescribes
full left-sequential conjunction is commutative in CL.
We show that in CL, the full left-sequential connectives and negation define Bochvar’s three-valued strict logic.

We observe that CL also has a two-valued variant of which 
the full left-sequential connectives and negation 
define a commutative logic that is weaker than propositional logic because the absorption laws do not hold. 

Next, we show that the original, equational axiomatisation of CL is not independent and
give several alternative, independent axiomatisations.
\end{abstract}

\noindent
\textbf{Keywords \& phrases:}
Conditional logic, ~short-circuit evaluation, ~short-circuit logic, 
\\
left-sequential connectives, ~Bochvar's logic

{\small\tableofcontents}

\newpage

\section{Introduction}

\begin{table}
{\small
\centering
\rule{1\textwidth}{.4pt}
\\[-5mm]
\begin{minipage}[t]{0.47\linewidth}\centering
\begin{Lalign}
\label{a}
\tag{1.1a}
&x'' = x
\\
\label{b}
\tag{1.1b}
&(x \wedge y)' = x' \vee y'
\\
\label{c}
\tag{1.1c}
&(x \wedge y) \wedge z = x\wedge (y \wedge z)
\\
\label{d}
\tag{1.1d}
&x\wedge (y \vee z) = (x \wedge y) \vee (x \wedge z)\\[-2mm]\nonumber
\end{Lalign}
\end{minipage}
\begin{minipage}[t]{0.52\linewidth}\centering
\begin{Lalign}
\label{e}
\tag{1.1e}
&(x \vee y) \wedge z = (x \wedge z) \vee (x' \wedge y \wedge z)
\\
\label{f}
\tag{1.1f}
&x \vee (x \wedge y) = x
\\
\label{g}
\tag{1.1g}
&(x \wedge y) \vee (y \wedge x) = (y \wedge x) \vee (x \wedge y)
\end{Lalign}
\end{minipage}
\hrule
}
\caption{The set of axioms of CL, given in~\cite{GS90}}
\label{tab:CL}
\end{table}

Conditional logic (CL) is defined by 
Guzmán and Squier in the paper \emph{The algebra of conditional logic}~\cite{GS90}, where
the conjunction adapts McCarthy's
conjunction to the domain of the three truth values \true, \false, \undefi\ with constants 
\tr, \fa, \und, respectively.
In CL, the 3-element algebra $C = \{\tr, \fa, \und\}$ is studied with the operations
$x \mapsto x':C\to C$ and $(x, y)\mapsto x \wedge y, x \vee y : C^2\to C$ given by the tables
\[
\begin{array}[t]{l|l}
&'\\
\hline\\[-3mm]
\tr&\fa\\
\fa&\tr\\
\und&\und
\end{array}
\hspace{2cm}
\begin{array}[t]{l|lll}
\wedge&\tr&\fa&\und\\
\hline\\[-3mm]
\tr&\tr&\fa&\und\\
\fa&\fa&\fa&\fa\\
\und&\und&\und&\und
\end{array}
\hspace{2cm}
\begin{array}[t]{l|lll}
\vee&\tr&\fa&\und\\
\hline\\[-3mm]
\tr&\tr&\tr&\tr\\
\fa&\tr&\fa&\und\\
\und&\und&\und&\und
\end{array}
\]
A \emph{$C$-algebra} is defined as an algebra with the operations $'$, $\wedge$ and $\vee$. 
This set-up does not require that the constants are defined in a $C$-algebra, and thus includes
four basic cases:
\\[-6mm] 
\begin{enumerate}[(i)]
 \setlength\itemsep{0mm}
\item 
no distinguished elements, 
\item \und\ distinguished, 
\item \tr\ and \fa\ distinguished, and
\item all elements of $C$ distinguished.
\end{enumerate}
In~\cite{GS90} it is shown that the laws in Table~\ref{tab:CL} are complete 
for the variety of $C$-algebras.
Laws~\eqref{a} and~\eqref{b} imply that this variety satisfies a 
duality principle identical to the one in ordinary Boolean algebra.
For the case that the constants for these three truth values are distinguished, the 
following three axioms are added:
\[
\und'=\und, \quad \tr\wedge x=x, \quad \tr'=\fa.
\]
We end this brief introduction to CL with a quote from~\cite{GS90} about the name ``conditional logic'' 
and ``short-circuit evaluation'' :
{\small\begin{quote}
``For example, let $B=\{T, F\}$ with ordinary negation ($'$), 
conjunction ($\wedge$) and disjunction ($\vee$).
[...]
\\
Up to anti-isomorphism, there is a unique regular extension of $B$ to a 3-valued logic with 
non-commutative $\wedge$ and $\vee$ which satisfies deMorgan's laws $x'' = x$ and $(x \wedge y)'= x' \vee y'$. 
Following~\cite{Gries81}, we call this algebra ``conditional logic.'' Conditional logic was first studied by 
McCarthy~\cite{MC63a} and~\cite{MC63}. It is the logic of choice in several programming languages (C, Prolog, 
Lisp, ...) in which the idea of ``short circuit evaluation'' is implemented.
It leads to faster evaluation of the logical expression, since evaluation 
stops as soon as an answer can be obtained.'' \blauw{[Reference numbering has been adjusted.]}
\end{quote}}

We first explain why we view short-circuit evaluation (SCE) as the basic evaluation strategy 
in any  setting that \emph{prescribes} sequential evaluation of $\wedge$ and $\vee$.
Following \cite{BergstraBR1995}, we write 
\[
x\leftand y
\]
for the conjunction of $x$ and $y$ that {prescribes}
SCE, where the small circle indicates that the left argument must be evaluated first, and 
use the name \emph{short-circuit conjunction}. 
We write $\leftor$ 
for \emph{short-circuit disjunction}, which as above is defined dually: 
\[
x\leftor y=\neg(\neg x\leftand\neg y).
\]

From now on, we will mostly use $\leftand$ and $\leftor$ and $\neg$ when writing about \CL. 
We write
\[
\CLtwo
\quad\text{ and }\quad
\CLthree
\]
respectively for the case where constants \tr\ and \fa\  are
distinguished and the two defining axioms $\neg\tr=\fa$ and $\tr\leftand x=x$ are added to those 
of $\CL$, and the case where \und\ is also distinguished and the axiom $\neg\und=\und$ is added. 
If no constants are distinguished, we simple write \CL.

Another well-known sequential evaluation strategy is \emph{full left-sequential evaluation}, 
where both arguments of a conjunction
and a disjunction are always evaluated. The connective \emph{full left-sequential conjunction}, denoted by
$x\fulland y$, 
prescribes full left-sequential evaluation and can be defined in terms of SCE:
\[x\fulland y=(x\leftor (y\leftand\fa))\leftand y.
\]
Thus, if $x$ evaluates to \true, $y$ is evaluated and determines the evaluation value, and if
$x$ evaluates to \false, $y\leftand\fa$ is evaluated, which determines that
$x\fulland y$ has the value \false. The connective \emph{full left-sequential disjunction},
denoted by $x\fullor y$, is defined dually: $x\fullor y=\neg(\neg x\fulland \neg y)$.

Short-circuit evaluation admits several semantics (valuation congruences), which prompted the 
definition of several equational logics that we introduced in~\cite{BPS13} as two-valued
\emph{short-circuit logics} (SCLs). For example, $x\leftand x=x$ is refuted in  
Free SCL (\FSCL), while it holds in Memorising SCL (\MSCL). 
Next, Static SCL (\SSCL) is the short-circuit logic that represents propositional logic 
with connectives ${\leftand}$ and ${\leftor}$ instead of $\wedge$ and $\vee$, and thus with short-circuit
evaluation as the prescribed evaluation strategy.
For example, $x\leftand y=y\leftand x$ is refuted in  
\MSCL, while it holds in
in \SSCL.\footnote{%
  In~\cite{BPS13}, five different, two-valued SCLs are defined, of which \FSCL\ is the weakest and
  \SSCL, a sequential version of propositional logic, is the strongest.}

These SCLs can be ordered by a chain $\FSCL\prec\MSCL\prec\SSCL$:
\begin{align*}
\FSCL\vdash s=t&~\Longrightarrow~ \mathcal L\vdash s=t
\quad\text{with $\mathcal L$ any SCL},\\
\MSCL\vdash s=t&~\Longrightarrow~ \SSCL\vdash s=t.
\end{align*}
The logic \FSCL\ is immune to side effects: in FSCL, $x\leftand x=x$ 
does not hold because a side effect occurring at the first evaluation of an atom 
(propositional variable) 
may alter its second 
evaluation result, while \MSCL\ represents the setting in which
the evaluation of each atom in an expression is memorised, so that
(atomic) side effects cannot occur.
In \MSCL, the entire process of evaluation is distinctive and therefore ${\leftand}$ is not commutative,
e.g., $x\leftand\fa$ requires evaluation of $x$, while $\fa\leftand x$ does not.
Examples of laws of \FSCL\ (and thus of \MSCL) are 
\[\text{$\neg\neg x=x, 
~~\tr\leftand x =x, 
~~\fa\leftand x=\fa$ ~~\text{and}
~ 
$(x\leftand y)\leftand z=x\leftand(y\leftand z)$.}
\]
In \MSCL, the connective ${\fulland}$ does not depend on the presence of the constant
\fa\ because 
\[
\MSCL\vdash x\fulland y =(x\leftand y)\leftor(y\leftand x).
\]

We observe that \CLtwo\ 
can be seen as a two-valued variant of \CL\ that axiomatises
a short-circuit logic that resides in between \MSCL\ and \SSCL, which we will define as 
\CLSCLtwo\ (two-valued Conditional SCL).
A simple and interesting example that distinguishes \MSCL\ and \CLSCLtwo\ is the equation
\[
x\fulland y=y\fulland x,
\] 
which holds in the latter and is refuted in \MSCL.
Because \CLSCLtwo\ refutes $x\leftand y=y\leftand x$, it follows that
$\MSCL\prec\CLSCLtwo\prec\SSCL$.

In~\cite{BPS21} we defined three-valued versions of \FSCL\ and \MSCL, denoted \FSCLu\ and \MSCLu, respectively.
\CLthree\ axiomatises \CLSCL, the three-valued version of \CLSCLtwo, 
and $\MSCLu\prec\CLSCLu$ follows from the same example.
(Of course, \SSCL\ has no extension with \und\ because commutativity of ${\leftand}$ implies 
$\und = \und\leftand\fa=\fa\leftand\und=\fa$.)

We continue with a brief introduction to the building blocks for our generic definition 
of short-circuit logics.
In 1985, Hoare introduced in~\cite{Hoa85} the \emph{conditional}, a ternary connective with 
notation $x\lef y\rig z$ that models \textbf{if $y$ then $x$ else $z$}, and 
provided eleven equational axioms to show that the conditional and two constants for 
truth and falsehood characterise the propositional calculus.
With the conditional, algebraic properties 
can be elegantly expressed, e.g., $\tr \lef x\rig \fa=x$
and $x \lef (y \lef z \rig u)\rig v = (x \lef y \rig v) \lef z \rig (x \lef u \rig v)$. 
However, the same result was proved in 1948 by Church 
in~\cite{Chu48}, where he introduced \emph{conditioned disjunction},\footnote{%
  For the conditioned disjunction, reference~\cite{Chu56} (1956)
  is often used, and also the name \emph{conditional disjunction}.} 
notation $[x,y,z]$, that also models \textbf{if $y$ then $x$ else $z$}.
  Moreover, in~\cite{Chu48}, the duality property of $[x,y,z]$ 
is highlighted and even older related work is mentioned, a quote: 
  \begin{quote}{\small
  ``\emph{Conditioned disjunction, $t$, and $f$, are 
  a complete set of independent primitive connectives for the propositional calculus.} This has
  been proved by E.\ L. Post as a part of his comprehensive study of primitive connectives for the
  propositional calculus. [footnote: E. L. Post. 
  \emph{The Two-Valued Iterative Systems of Mathematical Logic} (Annals
  of Mathematics Studies, no.\ 5), Princeton, N. J., 1941.] But it seems worth while to give a separate 
  proof here. [footnote: The method of the proof of completeness is that used by E. L. Post in 1921 in connection
  with a different set of primitive connectives. 
  See the \emph{American Journal of Mathematics}, vol. 43, pp. 167-168.]''
  \blauw{[Our comment: 
  the mentioned ``method of the proof'' is a simple induction with respect to the number of variables 
  of a possible connective.]}
}
  \end{quote}
Both~\cite{Hoa85} and~\cite{Chu48} do not address the fact that the ternary connective defined therein 
in fact prescribes short-circuit evaluation.

In~\cite{BergstraP2011}, we introduced \emph{proposition algebra}, motivated 
by the observation that Hoare's conditional both naturally prescribes short-circuit evaluation 
and admits several semantics (valuation congruences), the least distinguishing of which is that 
of sequential propositional logic. 
Furthermore,
\[x\leftand y = y\lef x\rig\fa
\]
can be seen as a formal definition of what is usually described as 
the short-circuit evaluation of conjunction.
In~\cite{BPS13}, we used proposition algebra as an algebraic framework to define several short-circuit logics,
including \FSCL, \MSCL\ and \SSCL, and in~\cite{BPS21} 
to define \FSCLu\ and \MSCLu, i.e. the three-valued variants of \FSCL\ and \MSCL.

\noindent\textbf{Contents of the paper.}~
\\
-
In Section~\ref{sec:propa}, we recall the definitions of  
the proposition algebras \CP\ and \CPmem\ that underlie the definitions of \FSCL\ and \MSCL, respectively.
We also recall the three-valued variants.
\\
- 
In Section~\ref{sec:3}, we define the two-valued proposition algebra \CPcond\ which is related to 
conditional logic.
We define and explain in detail so-called `CL-basic forms', based on the mem-basic forms for \CPmem, which 
resemble the ``reduced trees'' defined in~\cite{GS90} that provide a 
``standard model'' for free $C$-algebras.
We prove that \CLtwo\ 
completely axiomatises \CLSCLtwo, the SCL defined by \CPcond\ and the short-circuit connectives.
\\
-
In Section~\ref{sec:4}, we define the three-valued proposition algebra \CPcondu\ and do the same as in 
Section~\ref{sec:3} for the three-valued variants.
It follows that \CLthree\ is a complete axiomatisation of \CLSCLu.
\\
-
In Section~\ref{sec:indep}, we show that the axiomatisation of CL in Table~\ref{tab:CL} is not
independent and provide several axiomatisations of \CL, \CLtwo, \CLu\ (i.e. \CL\ with only \und), 
and \CLthree\ that are independent.
\\
-
In Section~\ref{sec:conc}, 
we consider the distinguishing example
$x\fulland y=y\fulland x$ that proves $\MSCLu\prec \CLSCLu$  
and show that in \CLSCLu, and thus in \CLthree,
the (definable) full left-sequential connectives and negation
define Bochvar’s three-valued strict logic.
We conclude with a comment on the role of the constants \tr\ and \fa\ in SCLs and some remarks on 
related and future work.

\section{Proposition algebra and short-circuit logics}
\label{sec:propa}
In this section we recall the definitions of two-valued and three-valued proposition algebra, and 
of the short-circuit logics mentioned in the previous section.

\begin{table}
{
\centering
\rule{1\textwidth}{.4pt}
\begin{align*}
\label{cp1}
\tag{CP1} 
\CP:&
&
x \lef \tr \rig y &= x\\
\label{cp2}\tag{CP2}
&&x \lef \fa \rig y &= y\\
\label{cp3}\tag{CP3}
&&\tr \lef x \rig \fa  &= x\\
\label{cp4}\tag{CP4}
&&x \lef (y \lef z \rig u)\rig v &= 
	(x \lef y \rig v) \lef z \rig (x \lef u \rig v)
\\[0mm]
\cline{1-4}
\\[-4mm]
\label{cpu}
\tag{CP-U}
\CPu:&
&
x\lef\und\rig y
 &=\und
\end{align*}
\hrule
}
\caption{\CP, a set of basic axioms of proposition algebra, 
and $\CPu=\CP\cup \eqref{cpu}$}
\label{tab:CP}
\end{table}

\begin{definition}
\label{def:CP}
For $A$ a set of atoms (propositional variables), define the signatures 
\[
\SigCP=\{\_\lef\_\rig\_,\tr,\fa,a\mid a\in A\}
\quad\text{and}\quad
\SigCPu=\SigCP\cup\{\und\}\]
and the sets $\PS$ and $\PSu$ of \textbf{conditional expressions} 
as the set of closed terms over \SigCP\ and \SigCPu, respectively.
\end{definition}

In Table~\ref{tab:CP} we display \CP, a set of basic axioms for the conditional 
defined in~\cite{BergstraP2011}, and \CPu, which was defined in~\cite{BPS21}.
The dual of a conditional expression $P$, notation $P^d$, is defined by $\tr^d=\fa$, $\fa^d=\tr$,
$\und^d=\und$,
$a^d=a$ for $A\in A$,
and $(P\lef Q\rig R)^d=R^d\lef Q^d\rig P^d$. Taking the dual 
of a variable $x$ as itself, it easily follows that \CP\ and \CPu\ are self-dual axiomatisations.

The axioms of \CPu\ define a congruence over \PSu: 
\emph{free valuation congruence}, which can be characterised
with help of basic forms.\footnote{%
  About terminology: 1) Two-valued free valuation congruence (FVC) was defined in~\cite{BergstraP2011}
  in terms of ``valuation algebras'', hence the name.
  2)
  We use ``basic form'' rather than ``normal form'' because 
  the ``natural'' normal form of $\tr\lef a\rig\fa$ would be $a$, while the basic form 
  of atom $a$ is $\tr\lef a\rig\fa$.}

\begin{definition}
\label{def:basic}
\textbf{Basic forms}
are defined by the following grammar
\[t::= \tr\mid\fa\mid\und\mid t\lef a \rig t\quad\text{for $a\in A$.}\]
We write \BFu\ for the set of basic forms over $A$, and $\BF$ for the subset of
\BFu\ in which \und\ does not occur. 
The constans in a basic form $P\in\BFu$ are called  \textbf{leaves}. 
The \textbf{depth} $d(P)$ of $P\in\BFu$ is defined by 
$d(\tr) = d(\fa) = d(\und) = 0$ and 
$d(Q \lef a \rig R) = 1 + \max\{d(Q), d(R)\}$. 

\end{definition}

In~\cite[Appendix A8]{BPS21} we prove that for each $P\in\PSu$ there is a unique $Q\in\BFu$ such that
$\CPu\vdash P=Q$. 
This implies that for all $P,Q\in\PSu, ~\CPu\vdash P=Q$ if, and only if, $P$ and $Q$ have the same (unique)
basic form, that is, if $P$ and $Q$ are free valuation congruent.

\emph{Memorising valuation congruence} is
defined by \CPmemu, i.e., \CPu\ extended with the axiom
\begin{align}
\label{CPmem}
\tag{CPmem}
x\lef y\rig(z\lef u\rig (v\lef y\rig w))=x\lef y\rig(z\lef u\rig w),
\end{align}
which expresses that the evaluation of $y$ is memorised. Furthermore, in~\cite{BPS13},
\CPmem\ is defined as the extension of \CP\ with~\eqref{CPmem}.

By replacing $y$ 
by $\fa\lef y\rig\tr$ and $u$ by $\fa\lef u\rig\tr$, 
it follows that
\[\CPmemu\vdash ((w\lef y\rig v)\lef u\rig z)\lef y\rig z =(w\lef u\rig z)\lef y\rig x,\]
which is the dual of the above axiom, so the duality principle holds in \CPmemu.

\begin{definition}
\label{def:membasic}
The constants \tr, \fa\ and \und\ are \textbf{mem-basic forms}, and
a basic form $P\lef a\rig Q$ is a 
\textbf{mem-basic form} if $a$ does not occur in $P$ and $Q$.

\noindent
We write \mBFu\ for the set of basic forms over $A$, and $\mBF$ for the subset of
\mBFu\ in which \und\ does not occur.
\end{definition}

\begin{definition}
\label{def:aux}
For $a\in A$, the auxiliary functions $\Le_a,\Ri_a:\BFu\to\BFu$, ``left $a$-reduction'' and ``right $a$-reduction'',
are defined by 
\begin{align*}
&\text{$\Le_a(\tr)=\tr,~~\Le_a(\fa)=\fa,~~\Le_a(\und)=\und$,}
&&\text{$\Ri_a(\tr)=\tr, ~~\Ri_a(\fa)=\fa,~~\Ri_a(\und)=\und$,}\\[2mm]
&\Le_a(P\lef b\rig Q)=
\begin{cases}
P&\text{if $b=a$},\\
\Le_a(P)\lef b\rig \Le_a(Q)&\text{if $b\ne a$},
\end{cases}
\quad 
&&
\Ri_a(P\lef b\rig Q)=\begin{cases}
Q&\text{if $b=a$},\\
\Ri_a(P)\lef b\rig \Ri_a(Q)&\text{if $b\ne a$}.
\end{cases}
\end{align*}
\end{definition}

So, for each $P\in\mBFu$, both $\Le_a(P)$ and $\Ri_a(P)$ are mem-basic forms that do not contain $a$.
Mem-basic forms and the functions $\Le_a()$ and $\Ri_a()$ are a point of departure in the next section.

{In~\cite[Thm.5.7]{BP17} we proved a result that implies that for each $P\in\PS$, 
there is a unique $Q\in\mBF$ such that $\CPmem\vdash P=Q$,
and for all $P,Q\in\PS, ~\CPmem\vdash P=Q$ if, and only if, $P$ and $Q$ have the same (unique)
mem-basic form, that is, if $P$ and $Q$ are memorising valuation congruent.
Given the approach in~\cite{BPS21}, 
the same results follow for the three-valued setting.
}

\begin{definition}
\label{def:SigSCL}
For $A$ a set of atoms (propositional variables), define the signatures 
\[
\SigSCL=\{\leftand, \leftor, \neg,\tr,\fa, a\mid a \in A\}
\quad\text{and}\quad
\SigSCLu=\SigSCL\cup\{\und\}
\]
and the sets $\SP$ and $\SPu$ of \textbf{sequential expressions} 
as the set of closed terms over $\SigSCL$ and \SigSCLu, respectively.
\end{definition}

The short-circuit connectives are defined in \CPu\ (and in \CP) by the following axioms:
\begin{align}
\label{defneg}
\tag{defNeg}
\neg x &=\fa\lef x\rig \tr,\\
\label{defand}
\tag{defAnd}
x\leftand y &=y\lef x\rig\fa,
\end{align}
and, by duality, short-circuit disjunction should satisfy in \CPu\ (and in \CP)
\begin{equation}
\label{defor}
\tag{defOr}
x\leftor y=\tr\lef x\rig y,
\end{equation}
which we show below.
In a similar way it can be shown that full left-sequential conjunction, which we already defined by 
$x\fulland y=(x\leftor (y\leftand\fa))\leftand y$, 
satisfies 
\[
x\fulland y=y\lef x\rig(\fa\lef y\rig\fa).
\]

In~\cite{BPS13} we defined (two-valued) short-circuit logics in a generic way with help of the 
conditional by restricting the consequences of a particular \CP-axiomatisation extended with 
equations~\eqref{defneg} and~\eqref{defand} to the signature $\SigSCL$, 
and we repeat here some of these definitions.
So, the conditional connective is considered a hidden operator.
In these definitions, the export operator $\export$ of Module algebra~\cite{BHK90}
is used to express hiding in a concise way: 
in module algebra, $S\export X$  is the operation that 
exports the signature $S$ from module $X$ while declaring 
other signature elements hidden. 

\begin{definition}
\label{def:SCL}
A \textbf{short-circuit logic}
is a logic that implies the consequences
of the module expression
\begin{align*}
\SCL=\{\tr,\neg,\leftand\}\export(&\CP\cup\{\eqref{defneg},\eqref{defand}\}).
\end{align*}
\end{definition}

\begin{definition}
\label{def:FSCL}
\textbf{Free short-circuit logic $(\FSCL)$}
is the short-circuit logic that implies no other 
consequences than those of the module expression \SCL.

\noindent
\textbf{Memorising short-circuit logic $(\MSCL)$}
is the short-circuit logic that implies no other 
consequences than those of the module expression 
\[
\{\tr,\neg,\leftand\}\export(\CPmem\cup\{\eqref{defneg},\eqref{defand}\}).
\]
\textbf{Static short-circuit logic $(\SSCL)$}
is the short-circuit logic that implies no other 
consequences than those of the module expression 
\[
\{\tr,\neg,\leftand\}\export(\CPmem\cup\{\fa\lef x\rig\fa=\fa, \eqref{defneg},\eqref{defand}\}).
\]
\end{definition}

To enhance readability, we extend these short-circuit logics with the constant \fa\ and its 
defining axiom $\fa=\neg\tr$,
which is justified by the \SCL-derivation
\begin{align*}
\fa&=\fa\lef \tr\rig\tr
&&\text{by~\eqref{cp1}}\\
&=\neg\tr,
&&\text{by~\eqref{defneg}}
\end{align*}
and with the connective $\leftor$ and its defining 
axiom $x\leftor y=\neg(\neg x\leftand\neg y)$, which is justified by 
\begin{align*}
\neg(\neg x\leftand\neg y)
&=\fa\lef((\fa\lef y\rig\tr)\lef(\fa\lef x\rig\tr)\rig\fa)\rig\tr
&&\text{by~\eqref{defneg}, \eqref{defand}}\\
&=\fa\lef(\fa\lef x\rig(\fa\lef y\rig\tr))\rig\tr
&&\text{by~\eqref{cp4}, \eqref{cp2}, \eqref{cp1}}\\
&=(\fa\lef\fa\rig\tr)\lef x\rig(\fa\lef(\fa\lef y\rig\tr)\rig\tr)
&&\text{by~\eqref{cp4}}\\
&=\tr\lef x\rig y.
&&\text{by~\eqref{cp2}, \eqref{cp4}, \eqref{cp1}}
\end{align*}

\begin{table}
{
\centering
\rule{1\textwidth}{.4pt}
\begin{align}
\label{Neg}
\tag{Neg}
\fa&=\neg\tr
\\
\label{Or}
\tag{Or}
x\leftor y
&=\neg(\neg x\leftand\neg y)
\\ 
\label{Tand}
\tag{Tand}
\tr\leftand x&=x
\\
\label{Abs}
\tag{Abs}
x\leftand(x\leftor y)
&=x
\\
\label{Mem}
\tag{Mem}
(x\leftor y)\leftand z
&=(\neg x\leftand(y\leftand z))\leftor(x\leftand z)
\hspace{2cm}
\end{align}
\hrule
}
\caption{The set \MSCLe\ of axioms for \MSCL}
\label{tab:MSCLe}
\end{table}

The equational logic $\MSCLe$ is defined by the five axioms in Table~\ref{tab:MSCLe}.
In~\cite{BPS21} it is proved that \MSCLe\ axiomatises \MSCL.

Some consequences of $\MSCLe$ are these (see~\cite{BPS21}):
\\[-6mm]
\begin{enumerate}[(i)]
 \setlength\itemsep{0mm}
\item 
the connective $\leftand$ is associative, i.e. $(x\leftand y)\leftand z=x\leftand(y\leftand z)$,
\item 
the connective $\leftand$ is idempotent, i.e. $x\leftand x=x$,
\item
the connective $\leftand$ is left-distributive, i.e. 
$x\leftand (y\leftor z)=(x\leftand y)\leftor (x\leftand z)$, and
\item
$\neg\neg x=x$, ~$x\leftand \neg x=\neg x\leftand x$, and $x\leftand(y\leftand x)=x\leftand y$.
\end{enumerate} 
~\\[-6mm]
An important
property of \MSCL\ is that the conditional can be expressed:
\begin{align*}
(x\leftand y)\leftor(\neg x\leftand z)
&=\tr\lef(y\lef x\rig\fa)\rig(\fa\lef x\rig z)
&&\text{by~\eqref{defneg}, \eqref{defor}, \eqref{cp2}, \eqref{cp4}}\\
&=(\tr\lef y\rig(\fa\lef x\rig z))\lef x\rig
(\fa\lef x\rig z)
&&\text{by \eqref{cp4}, \eqref{cp2}}\\
&=(\tr\lef y\rig\fa)\lef x\rig (u\lef \fa\rig(\fa\lef x\rig z))
&&\text{by \eqref{CPmem}, \eqref{cp2}}\\
&=y\lef x\rig z.
&&\text{by \eqref{cp3}, \eqref{CPmem}, \eqref{cp2}}
\end{align*}

In~\cite{BPS21}, we proved
$\MSCLe\vdash(y\leftand x)\leftor (\neg y\leftand z)
=(\neg y\leftand z)\leftor(y\leftand x)
=(y\leftor z)\leftand (\neg y\leftor x)$.

Finally, we defined in~\cite{BPS21} the short-circuit logics \FSCLu\ and \MSCLu\
by  using \CPu\ and \CPmemu, respectively, instead of their two-valued variants, and  exporting \und.
We proved that \MSCL\ and \MSCLu\ are axiomatised by the equational logics \MSCLe\ and $\MSCLeu=
\MSCLe\cup\{\neg\und=\und\}$, respectively.

\section{Conditional short-circuit logic, the two-valued case}
\label{sec:3}

In this section we restrict to the two-valued setting. In particular, where we write about
mem-basic forms, we refer to those in \mBF\ (see Definition~\ref{def:membasic}).
We define `CL-basic forms' based on the mem-basic forms for \CPmem\ and the short-circuit logic \CLSCLtwo\,
and prove that \CLtwo\ 
completely axiomatises \CLSCLtwo, the two-valued SCL defined by \CPcond\ and the short-circuit connectives.

\medskip

Consider the axiom
\begin{align*}
\label{eq:c1}
\tag{CL1}
\tr\lef (y\lef x\rig \fa)\rig(x\lef y\rig \fa) = \tr\lef (x\lef y\rig \fa)\rig(y\lef x\rig \fa),
\end{align*}
i.e. the translation of the \CL-axiom $(x\leftand y)\leftor(y\leftand x)=(y\leftand x)\leftor (x\leftand y)$
to \CP, and
the following axiom that is more convenient for a semantics based on 
mem-basic forms:
\begin{equation}
\label{eq:clcond}
\tag{CL2}
(x\lef y\rig z)\lef u\rig(v\lef y\rig w) = (x\lef u\rig v)\lef y\rig(z\lef u\rig w).
\end{equation}

\begin{lemma}
\label{la:convenient}
$(i)~
\CPmem + \eqref{eq:c1}\vdash \eqref{eq:clcond}$, and 
$(ii)~
\CPmem + \eqref{eq:clcond}\vdash
\eqref{eq:c1}.$
\end{lemma}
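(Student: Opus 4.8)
The strategy for both directions is to work entirely inside the conditional calculus, using the memorising axiom \eqref{CPmem} to "pull" a repeated test variable to the outermost position and then compare the two sides as basic forms. Throughout I will freely use the self-duality of \CPmem\ and the derived consequences recorded in the excerpt, in particular that the conditional is expressible and that \eqref{CPmem} lets me delete a nested re-test of a variable that has already been tested above it.

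For part $(i)$, I would start from \eqref{eq:c1} and aim to derive \eqref{eq:clcond}. The key observation is that \eqref{eq:clcond} is a "test-swapping" identity: on the left the outer test is $u$ and the inner test is $y$, while on the right these are interchanged. I would substitute suitable terms into \eqref{eq:c1} (replacing its $x$ and $y$ by conditional expressions, e.g.\ $x:=x\lef u\rig v$ or similar, and possibly also instantiating with $\fa\lef\cdot\rig\tr$ tricks as is done elsewhere in the paper) so that both $(x\leftand y)\leftor(y\leftand x)$-patterns unfold, via \eqref{cp4}, into nested conditionals tested by $u$ and by $y$. After unfolding, repeated occurrences of $u$ and of $y$ get collapsed by \eqref{CPmem} and its dual, and the two sides of the instantiated \eqref{eq:c1} should reduce precisely to the two sides of \eqref{eq:clcond}. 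The bookkeeping here is the routine part; the creative step is guessing the right substitution instance of \eqref{eq:c1}.

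For part $(ii)$, conversely, I would take \eqref{eq:clcond} and specialise it to recover \eqref{eq:c1}. Note that \eqref{eq:c1} has the shape $\tr\lef P\rig Q=\tr\lef Q\rig P$ where $P=y\lef x\rig\fa$ and $Q=x\lef y\rig\fa$. Using \eqref{defor}/\eqref{defand} this is exactly $(x\leftand y)\leftor(y\leftand x)=(y\leftand x)\leftor(x\leftand y)$. I would first rewrite $\tr\lef(y\lef x\rig\fa)\rig(x\lef y\rig\fa)$ with \eqref{cp4} into a form $(\tr\lef y\rig(x\lef y\rig\fa))\lef x\rig(\fa\lef y\rig\fa)$ — pushing the inner test $x$ to the top — and simplify the inner $y$-retests by \eqref{CPmem}, arriving at an expression tested by $x$ then $y$. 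Doing the analogous rewriting of the right-hand side of \eqref{eq:c1} gives an expression tested by $y$ then $x$. The identity \eqref{eq:clcond}, applied with the appropriate instantiation of its six argument slots, is exactly the statement that these two normal forms are equal. I expect the constants $\fa$ to line up automatically once \eqref{cp1}, \eqref{cp2} are used on the leaves.

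The main obstacle in both directions is purely administrative: \eqref{cp4} generates several nested copies of the swapped test variable, and one must apply \eqref{CPmem} (and its dual) in the right places and in the right order to collapse them before the two sides can be matched. There is no conceptual difficulty — both \eqref{eq:c1} and \eqref{eq:clcond} express the commutativity of $\fulland$ up to the presence/absence of $\fa$-leaves — so once the correct substitution instances are identified, each derivation is a finite chain of applications of \eqref{cp1}--\eqref{cp4} and \eqref{CPmem}. I would present each direction as such a chain, annotating each step with the axiom(s) used, exactly in the style of the \MSCLe-derivations already displayed in Section~\ref{sec:propa}.
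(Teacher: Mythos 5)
Your plan for part (ii) is essentially the paper's proof: rewrite $\tr\lef(y\lef x\rig\fa)\rig(x\lef y\rig\fa)$ with \eqref{cp4} and memorisation into $(\tr\lef y\rig\fa)\lef x\rig(\fa\lef y\rig\fa)$, apply \eqref{eq:clcond} to swap the tests $x$ and $y$, and undo the rewriting symmetrically. That direction is fine.

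Part (i) is where the substance of the lemma lies, and there your plan defers exactly the step that constitutes the proof. You propose to obtain \eqref{eq:clcond} as a substitution instance of \eqref{eq:c1} (with compound terms for its $x$ and $y$) followed by \eqref{cp4}/\eqref{CPmem} bookkeeping, and you explicitly leave ``guessing the right substitution instance'' open. But no substitution instance of \eqref{eq:c1} alone can suffice: both sides of \eqref{eq:c1} have root $\tr$ and $\fa$-leaves in fixed positions, and the axiom has only two variables, whereas \eqref{eq:clcond} has six variables, four of them arbitrary leaf slots. The paper's derivation needs two further ideas that your plan does not supply. First, it places an instance of \eqref{eq:c1} not under a substitution but inside the context $x\lef\cdot\rig v$, and reduces both resulting terms, via \eqref{cp4} and memorisation, to the three-variable identity
\[
(x\lef y\rig v)\lef u\rig(v\lef y\rig v)=(x\lef u\rig v)\lef y\rig(v\lef u\rig v).
\]
Second---and this is the step your plan has no counterpart for---it substitutes $v:=v\lef y\rig(z\lef u\rig w)$ into this derived identity and uses memorisation once more to decouple the three occurrences of $v$ into the independent slots $z$, $v$, $w$ of \eqref{eq:clcond}. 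Without something playing the role of this decoupling substitution, the ``routine bookkeeping'' you describe cannot produce the general six-variable axiom, so as it stands part (i) of your argument has a genuine gap.
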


\begin{proof}
$(i)$~ Derive
\begin{align}
\label{eq:8}
x\lef &(\tr\lef(y\lef u\rig\fa)\rig (u\lef y\rig \fa)) \rig v\\
\nonumber
&=
x\lef (y\lef u\rig\fa)\rig (x\lef (u\lef y\rig \fa) \rig v)\\
\nonumber
&=
(x\lef y\rig(x\lef (u\lef y\rig \fa) \rig v))\lef u\rig(x\lef (u\lef y\rig \fa) \rig v)\\
\nonumber
&=
(x\lef y\rig((x\lef u\rig v)\lef y\rig  v))\lef u\rig((x\lef u\rig v)\lef y\rig  v)\\
&=
\label{eq:9}
(x\lef y\rig v)\lef u\rig(v\lef y\rig  v), 
\end{align}
and apply axiom~\eqref{eq:c1} in~\eqref{eq:8}, so exchange $u$ and $y$. By~\eqref{eq:9}, 
\[
(x\lef y\rig v)\lef u\rig(v\lef y\rig  v)=(x\lef u\rig v)\lef y\rig(v\lef u\rig  v).
\]
Now replace $v$ by $v\lef y\rig (z\lef u\rig w)$ and apply memorisation:
\[
(x\lef y\rig z)\lef u\rig(v\lef y\rig  w)=(x\lef u\rig v)\lef y\rig(z\lef u\rig  w).
\]
$(ii)$~ Derive $\tr\lef(y\lef x\rig\fa)\rig(x\lef y\rig\fa)=(\tr\lef y\rig\fa)\lef x\rig (\fa\lef y\rig\fa)$
and apply~\eqref{eq:clcond}.
\end{proof}

\begin{definition}
\label{CPcond}
Let \CPcond\ be the extension of \CPmem\ with axiom \eqref{eq:clcond}.
\textbf{Conditional short-circuit logic $(\CLSCLtwo)$}
is the short-circuit logic that implies no other 
consequences than those of the module expression 
\[
\{\tr,\neg,\leftand\}\export(\CPcond\cup\{\eqref{defneg},\eqref{defand}\}).
\]
\end{definition}

Observe that the axiom~\eqref{eq:clcond} is self-dual, so the duality principle holds in \CPcond.
In the setting of CL, we shall adopt a strict total ordering $<$ of $A$, in order to 
deal with the axiom~\eqref{eq:clcond}.
Whenever we want to be explicit about this ordering, we write 
\[
(A,<).
\]
Below we define CL-basic forms and prove that $\CPcond\vdash P=Q$ if, and only if, $P$ and $Q$
are \emph{conditional valuation congruent}, that is, if $P$ and $Q$
have the same unique CL-basic form (relative to a strict total ordering of $A$). 

\begin{example}
\label{ex:trees}
Consider the mem-basic form $\textit{Input}=P\lef a\rig Q$ with
\begin{align*}
P&=((\tr\lef d\rig\fa)\lef c\rig(\tr\lef d\rig\fa))\lef b\rig((\tr\lef c\rig\fa)\lef d\rig\tr),
\\
Q&=(\tr\lef d\rig\fa)\lef c\rig(\tr\lef d\rig\fa).
\end{align*}
Below, \textit{Input} is shown graphically as the tree at the top left. 
Furthermore, $\CPcond\,$-equivalent representations of \textit{Input}, with root atoms $a$ and $d$, 
respectively, are shown in the right column, and another with root atom $a$ and $d<c$ 
is shown at the bottom left:
\[
\begin{array}{l|l}
~\textit{Input}:&~adbc:\\
\begin{array}{l}
\begin{tikzpicture}[%
level distance=7.5mm,
level 1/.style={sibling distance=32mm},
level 2/.style={sibling distance=16mm},
level 3/.style={sibling distance=8mm},
level 4/.style={sibling distance=4mm}
]
\node (root) {$a$}
  child {node (lef1) {$b$}
    child {node (lef2) {$c$}
      child {node (lef3r) {$d$}
        child {node (lef4l) {$\tr$}} 
        child {node (lef4r) {$\fa$}}
        } 
      child {node (lef3r) {$d$}
        child {node (lef4l) {$\tr$}} 
        child {node (lef4r) {$\fa$}}
        } 
    }
    child {node (rig2) {$d$}
      child {node (lef3r) {$c$}
        child {node (lef4l) {$\tr$}} 
        child {node (lef4r) {$\fa$}}
        } 
      child {node (rig3r) {$\tr$}}
    }
  }
  child {node (rig1) {$c$}
    child {node (lef2) {$d$}
      child {node (lef3) {$\tr$}} 
      child {node (rig3) {$\fa$}}
    }
    child {node (rig2) {$d$}
      child {node (lef3r) {$\tr$}} 
      child {node (rig3r) {$\fa$}}
    }
  };
\end{tikzpicture}
\end{array}
\quad
&
\quad
\begin{array}{l}
\begin{tikzpicture}[%
level distance=7.5mm,
level 1/.style={sibling distance=32mm},
level 2/.style={sibling distance=16mm},
level 3/.style={sibling distance=8mm},
level 4/.style={sibling distance=4mm}
]
\node (root) {$a$}
  child {node (lef1) {$d$}
    child {node (lef2) {$b$}
      child {node (lef3r) {$c$}
        child {node (lef4l) {$\tr$}} 
        child {node (lef4r) {$\tr$}}
        } 
      child {node (lef3r) {$c$}
        child {node (lef4l) {$\tr$}} 
        child {node (lef4r) {$\fa$}}
        } 
    }
    child {node (rig2) {$b$}
      child {node (lef3r) {$c$}
        child {node (lef4l) {$\fa$}} 
        child {node (lef4r) {$\fa$}}
        } 
      child {node (rig3r) {$\tr$}}
    }
  }
  child {node (rig1) {$d$}
    child {node (lef2) {$c$}
      child {node (lef3) {$\tr$}} 
      child {node (rig3) {$\tr$}}
    }
    child {node (rig2) {$c$}
      child {node (lef3r) {$\fa$}} 
      child {node (rig3r) {$\fa$}}
    }
  };
\end{tikzpicture}
\\
\end{array}
\\[20mm]
\hline
\\[0mm]
~abdc:&~dacb:\\
\begin{array}{l}
\begin{tikzpicture}[%
level distance=7.5mm,
level 1/.style={sibling distance=32mm},
level 2/.style={sibling distance=16mm},
level 3/.style={sibling distance=8mm},
level 4/.style={sibling distance=4mm}
]
\node (root) {$a$}
  child {node (lef1) {$b$}
    child {node (lef2) {$d$}
      child {node (lef3r) {$c$}
        child {node (lef4l) {$\tr$}} 
        child {node (lef4r) {$\tr$}}
        } 
      child {node (lef3r) {$c$}
        child {node (lef4l) {$\fa$}} 
        child {node (lef4r) {$\fa$}}
        } 
    }
    child {node (rig2) {$d$}
      child {node (lef3r) {$c$}
        child {node (lef4l) {$\tr$}} 
        child {node (lef4r) {$\fa$}}
        } 
      child {node (rig3r) {$\tr$}}
    }
  }
  child {node (rig1) {$d$}
    child {node (lef2) {$c$}
      child {node (lef3) {$\tr$}} 
      child {node (rig3) {$\tr$}}
    }
    child {node (rig2) {$c$}
      child {node (lef3r) {$\fa$}} 
      child {node (rig3r) {$\fa$}}
    }
  };
\end{tikzpicture}
\end{array}
\quad
&
\quad
\begin{array}{l}
\begin{tikzpicture}[%
level distance=7.5mm,
level 1/.style={sibling distance=32mm},
level 2/.style={sibling distance=16mm},
level 3/.style={sibling distance=8mm},
level 4/.style={sibling distance=4mm}
]
\node (root) {$d$}
  child {node (lef1) {$a$}
    child {node (lef2) {$c$}
      child {node (lef3r) {$b$}
        child {node (lef4l) {$\tr$}} 
        child {node (lef4r) {$\tr$}}
        } 
      child {node (lef3r) {$b$}
        child {node (lef4l) {$\tr$}} 
        child {node (lef4r) {$\fa$}}
        } 
    }
    child {node (rig2) {$c$}
      child {node (lef3r) {$\tr$}}
      child {node (rig3r) {$\tr$}}
    }
  }
  child {node (rig1) {$a$}
    child {node (lef2) {$b$}
      child {node (lef3) {$c$}
        child {node (AA) {$\fa$}}
        child {node (BB) {$\fa$}}
        } 
      child {node (rig3) {$\tr$}}
    }
    child {node (rig2) {$c$}
      child {node (lef3r) {$\fa$}} 
      child {node (rig3r) {$\fa$}}
    }
  };
\end{tikzpicture}
\\
\end{array}\end{array}
\]
This example is a first step to our definition of so-called \emph{CL-basic forms},
a proper subset of the mem-basic forms.
\hfill\qedex
\end{example}

We discuss the question of how mem-basic forms can be assigned a 
unique representation modulo conditional valuation congruence, where this representation itself must of 
course be a mem-basic form.
In the following recursive definition, the ``shared alphabet'' of a mem-basic form 
is the set of atoms that occur in \emph{each} evaluation path
(whereas the ``alphabet'' $\alpha(P)$ of a mem-basic form $P$ is the set of all its atoms, i.e. 
those that occur in at least one evaluation path).

\begin{definition}
For a mem-basic form $P$, its 
\textbf{shared alphabet} $\sha(P)$
is the set defined  by
\[
\sha(\tr)=\sha(\fa)=\emptyset, \quad\sha(P_1\lef a\rig P_2)
=\{a\}\cup(\sha(P_1))\cap\sha(P_2)).
\]
\end{definition}

Hence, $\sha(P)=\emptyset$ iff $P\in\{\tr,\fa\}$, and if $P=P_1\lef a\rig P_2$, then $a\in\sha(P)$. 
In Example~\ref{ex:trees}, $\sha(\textit{Input})=\{a,d\}$, 
and this is also the shared alphabet of the other mem-basic forms. 

\begin{lemma}
\label{la:sha}
If $P$ is a mem-basic form and $a\in\sha(P)$, then $\CPcond\vdash P=\Le_a(P)\lef a\rig \Ri_a(P)$.
\end{lemma}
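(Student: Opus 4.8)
The plan is to prove the statement by induction on the structure of the mem-basic form $P$ (equivalently, on its depth $d(P)$ in the sense of Definition~\ref{def:basic}), case-splitting on whether the atom $a$ is the root atom of $P$. For the base case $P\in\{\tr,\fa\}$ we have $\sha(P)=\emptyset$, so the hypothesis $a\in\sha(P)$ is vacuously false and there is nothing to prove. For the inductive step, write $P=P_1\lef b\rig P_2$ where, since $P$ is a mem-basic form, $P_1$ and $P_2$ are mem-basic forms in which $b$ does not occur. From $\sha(P)=\{b\}\cup(\sha(P_1)\cap\sha(P_2))$ and $a\in\sha(P)$ we obtain two subcases. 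If $a=b$, then the defining clauses of Definition~\ref{def:aux} give $\Le_a(P)=P_1$ and $\Ri_a(P)=P_2$ outright, so $\Le_a(P)\lef a\rig\Ri_a(P)$ is literally $P$ and the claim is immediate.

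The substantive case is $a\ne b$. Then $a\in\sha(P_1)$ and $a\in\sha(P_2)$, so the induction hypothesis applies to both subterms, giving $\CPcond\vdash P_i=\Le_a(P_i)\lef a\rig\Ri_a(P_i)$ for $i=1,2$. Substituting these into $P=P_1\lef b\rig P_2$ and then applying the axiom~\eqref{eq:clcond} with the instantiation $x:=\Le_a(P_1)$, $y:=a$, $z:=\Ri_a(P_1)$, $u:=b$, $v:=\Le_a(P_2)$, $w:=\Ri_a(P_2)$ yields
\[
\CPcond\vdash P=(\Le_a(P_1)\lef b\rig\Le_a(P_2))\lef a\rig(\Ri_a(P_1)\lef b\rig\Ri_a(P_2)).
\]
Because $b\ne a$, the recursive clauses of Definition~\ref{def:aux} give $\Le_a(P_1)\lef b\rig\Le_a(P_2)=\Le_a(P)$ and $\Ri_a(P_1)\lef b\rig\Ri_a(P_2)=\Ri_a(P)$, so the right-hand side above is exactly $\Le_a(P)\lef a\rig\Ri_a(P)$, which completes the induction.

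I do not expect a serious obstacle here: the proof hinges entirely on recognising that~\eqref{eq:clcond} is precisely the identity that permutes the ``test'' atom $a$ upward past the root atom $b$, and that the auxiliary reductions $\Le_a$ and $\Ri_a$ commute with a $b$-node whenever $b\ne a$. The only points requiring care are bookkeeping ones — checking that the terms delivered by the induction hypothesis are the correct instances to feed into~\eqref{eq:clcond}, and noting that the memorising part of $\CPcond$ (the axiom~\eqref{CPmem}) is not actually needed for this lemma, only~\eqref{eq:clcond} together with the equational congruence rules.
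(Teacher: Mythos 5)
Your proof is correct and follows essentially the same route as the paper's: structural induction on $P$, the immediate case where $a$ is the root atom, and in the case $a\ne b$ the application of the induction hypothesis to both subterms followed by a single instance of axiom~\eqref{eq:clcond} to lift $a$ to the root. The extra bookkeeping you supply (the explicit instantiation of~\eqref{eq:clcond} and the observation that~\eqref{CPmem} is not needed) is consistent with, and slightly more detailed than, the paper's version.
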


\begin{proof}
By structural induction on $P$. 
If $P=P_1\lef a\rig P_2$, then $\Le_a(P)=P_1$ and $\Ri_a(P)=P_2$, so we are done, and
if $P=P_1\lef b\rig P_2$ with $b\ne a$, then $a\in(\sha(P_1)\cap\sha(P_2))$, hence 
\begin{align*}
\CPcond\vdash 
&=P_1\lef b\rig P_2
\\
&=(\Le_a(P_1)\lef a\rig\Ri_a(P_1))\lef b\rig(\Le_a(P_2)\lef a\rig\Ri_a(P_2))
&&\text{(by induction)}\\
&=(\Le_a(P_1)\lef b\rig\Le_a(P_2))\lef a\rig(\Ri_a(P_1)\lef b\rig\Ri_a(P_2))
&&\text{(by axiom~\eqref{eq:clcond})}\\
&=\Le_a(P)\lef a\rig\Ri_a(P).
\end{align*}
\end{proof}

Lemma~\ref{la:sha} implies the following consequence, which shows that $\sha()$ is preserved under
provable equality: 

\begin{lemma}
\label{la:cta}
If $P, Q$ are mem-basic forms and $\CPcond\vdash P=Q$, then $\sha(P)=\sha(Q)$.
\end{lemma}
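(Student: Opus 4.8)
The plan is to prove this by induction on the length of a derivation of $P = Q$ in \CPcond, but the cleanest route is actually to prove something stronger and more structural first: that $\sha()$ is a semantic invariant, i.e. it depends only on the conditional valuation congruence class. Since \CPcond\ is an equational theory, $\CPcond\vdash P = Q$ means $P$ and $Q$ are connected by a finite chain of one-step rewrites using instances of the axioms \eqref{cp1}--\eqref{cp4}, \eqref{CPmem}, and \eqref{eq:clcond} (in context). So it suffices to check that each single axiom instance, applied inside an arbitrary mem-basic-form context, preserves the shared alphabet — but there is a subtlety: intermediate terms in such a chain need not be mem-basic forms, so $\sha()$ as defined does not directly apply to them. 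I would avoid this by using the normal-form machinery instead: every conditional expression has a unique mem-basic form modulo \CPmem, and \CPcond\ adds only \eqref{eq:clcond}; so I would argue that it is enough to show $\sha(P) = \sha(P')$ whenever $P'$ is obtained from the mem-basic form $P$ by a single application of \eqref{eq:clcond} (rewriting a subterm matching the left-hand side to the right-hand side or vice versa) followed by renormalising to a mem-basic form.

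The key computation is then local. Consider the axiom \eqref{eq:clcond}, $(x\lef y\rig z)\lef u\rig(v\lef y\rig w) = (x\lef u\rig v)\lef y\rig(z\lef u\rig w)$. I would compute $\sha()$ of both sides under the substitution that produced the match, relative to the surrounding context, and observe that both sides mention exactly the atoms instantiating $y$ and $u$ at the top two levels, in swapped order, so the top-level contribution $\{\text{atom}_y, \text{atom}_u\}$ is symmetric; and the four "corner" subterms $x,z,v,w$ are permuted among the branches in a way that the two intersections $\sha(\cdot)\cap\sha(\cdot)$ computed in the definition of $\sha$ come out equal by commutativity and associativity of $\cap$. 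This is a finite case check. Renormalisation to a mem-basic form is handled by appealing to the already-cited uniqueness of mem-basic forms under \CPmem\ together with Lemma~\ref{la:sha}: if $a \in \sha(P)$ then $\CPcond\vdash P = \Le_a(P)\lef a\rig\Ri_a(P)$, which lets me pull shared atoms to the root and peel them off, reducing to subterms of smaller depth.

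Concretely, I would organise the induction as follows. Induct on $d(P) + d(Q)$. If $\sha(P) = \emptyset$, then by the remark after the definition $P \in \{\tr, \fa\}$; one checks (using that $\CPcond$ is sound for, e.g., the two-element Boolean model, so a constant cannot be proved equal to a non-constant mem-basic form, or more simply using uniqueness of mem-basic forms together with the fact that \eqref{eq:clcond} cannot change a constant) that then $Q \in \{\tr, \fa\}$ as well, so $\sha(Q) = \emptyset$. Otherwise pick $a \in \sha(P)$ at the root level. The real content is to show $a \in \sha(Q)$: here I use Lemma~\ref{la:sha} to write $\CPcond\vdash P = \Le_a(P)\lef a\rig\Ri_a(P)$, and I need the matching statement for $Q$; if $a$ were not in $\sha(Q)$ I would derive a contradiction with uniqueness of mem-basic forms by showing $Q$ has a mem-basic form not headed by a "forced" occurrence of $a$ on every path. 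Having established $\sha(P) = \sha(Q)$ at the top, strip the root $a$ from both: by Lemma~\ref{la:sha} and uniqueness, $\Le_a(P)$ and $\Le_a(Q)$ have the same mem-basic form, likewise $\Ri_a(P)$ and $\Ri_a(Q)$, and these are mem-basic forms of strictly smaller depth, so the induction hypothesis gives $\sha(\Le_a(P)) = \sha(\Le_a(Q))$ and $\sha(\Ri_a(P)) = \sha(\Ri_a(Q))$; combining via the defining equation $\sha(P_1\lef a\rig P_2) = \{a\} \cup (\sha(P_1) \cap \sha(P_2))$ finishes it.

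The main obstacle I anticipate is the bookkeeping around non-mem-basic intermediate terms in a \CPcond-derivation: the definition of $\sha()$ is only given for mem-basic forms, so the "invariant under each rewrite step" argument cannot be run naively. The resolution — reducing everything to (a) Lemma~\ref{la:sha}, (b) uniqueness of mem-basic forms modulo \CPmem, and (c) the single local symmetry check for \eqref{eq:clcond} — is what makes the proof go through, and getting the interface between these three clean is where the care is needed; the rest is routine.
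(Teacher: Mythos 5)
The paper's proof of Lemma~\ref{la:cta} is two lines: for $a\in\sha(P)$, Lemma~\ref{la:sha} gives $\CPcond\vdash P=\Le_a(P)\lef a\rig\Ri_a(P)$, transitivity gives $\CPcond\vdash Q=\Le_a(P)\lef a\rig\Ri_a(P)$, whence $a\in\sha(Q)$, and one concludes by symmetry. Your proposal circles around the same key ingredient (Lemma~\ref{la:sha}) but buries it in machinery that is never carried out --- the analysis of one-step rewrites, the local symmetry check for~\eqref{eq:clcond} under renormalisation, the induction on $d(P)+d(Q)$ --- and, at the one place where a real justification is needed, reaches for the wrong tool. The decisive step (which you rightly call ``the real content'', and which the paper itself states without elaboration) is: from $\CPcond\vdash Q=R_1\lef a\rig R_2$ with $R_1,R_2$ mem-basic forms not containing $a$, conclude $a\in\sha(Q)$. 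You propose to obtain this by ``contradiction with uniqueness of mem-basic forms''. That uniqueness is uniqueness modulo $\CPmem$: each term of $\PS$ has exactly one mem-basic form up to $\CPmem$-provable equality. It says nothing about two \emph{distinct} mem-basic forms that are identified only by the extra axiom~\eqref{eq:clcond} --- and $Q$ and $\Le_a(P)\lef a\rig\Ri_a(P)$ are in general exactly such a pair (otherwise CL-basic forms would be superfluous). So no contradiction can be extracted from that uniqueness. What would close the step is uniqueness of \emph{CL}-basic forms together with the observation that $\sha$ can be read off the CL-basic form, but that is Theorem~\ref{thm:zak3}, proved later, and cannot be presupposed here.

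Two smaller points. First, your base case is also unsound as argued: $\CPcond$ is indeed sound for the two-element truth-functional model, but in that model $\tr\lef a\rig\tr$ and $\tr$ agree under every valuation, so soundness for it cannot show that a constant is never provably equal to a non-constant mem-basic form. Second, your opening worry about non-mem-basic intermediate terms in a derivation is legitimate, and the local observation that both sides of~\eqref{eq:clcond} route every valuation to the same corner subterm through the same pair of atoms is the germ of a genuinely different, semantic proof; but as written it is abandoned rather than completed, whereas the paper sidesteps the issue entirely by never inspecting derivations and using provable equality only as a black box via Lemma~\ref{la:sha} and transitivity.
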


\begin{proof}
If $a\in\sha(P)$, then $\CPcond\vdash P=\Le_a(P)\lef a\rig\Ri_a(P)$, so 
$\CPcond\vdash Q=\Le_a(P)\lef a\rig\Ri_a(P)$ and thus $a\in\sha(Q)$. By symmetry, $\sha(P)=\sha(Q)$.
\end{proof}

To decide whether different mem-basic forms are provably equal in \CPcond, we define so-called
CL-basic forms in a recursive way.
We first define some technical notions that are used in these basic forms.

\begin{definition} 
For $A$ a nonempty set of atoms, $A^\si$ is the set of words over $A$ with the property that each 
$\sigma\in A^\si$ contains no multiple occurrences of the same atom. We write $\epsilon$
for the empty string, thus $\epsilon\in A^s$. Moreover, we define $\emptyset^\si=\{\epsilon\}$.

For $\sigma\in A^\si$ with length $|\sigma|=n$ and $P_0,...,P_{2^n-1}$ mem-basic forms, define $F_\sigma(P_0,...,P_{2^n-1})$ by
\begin{align*}
&\sigma=\epsilon:
&&F_{\sigma}(P_0)=P_0,
\\
&\sigma=a\rho, ~a\in A:
&&F_{\sigma}(P_0,...,P_{2^{n}-1})=
F_{\rho}(P_0,...,P_{2^{|\rho|}-1})\lef a\rig F_{\rho}(P_{2^{|\rho|}},...,P_{2^{n}-1}).
\end{align*}
We say that
$\sigma\in A^\si$ \textbf{agrees with} $(A,<)$ if $\sigma$'s atoms occur in increasing order.
\end{definition}

\begin{definition}
\label{def:clbasic}
A mem-basic form $P\in\mBF$ is a \textbf{CL-basic form} if 
for the unique $\sigma\in A^\si$ of length $|\sha(P)|$ 
that agrees with $(A,<)$,
\[\text{$P=F_{\sigma}(P_{0},..., P_{2^{|\sigma|}-1})$ and each
$P_i$ is a CL-basic form.} 
\]
\end{definition}

Observe that a mem-basic form $P$ is a CL-basic form if axiom~\eqref{eq:clcond} cannot be applied to $P$
(i.e. if $|\sha(Q)|\leq 1$ for all subterms $Q$ of $P$), e.g. as in 
\(
(\tr\lef b\rig\fa)\lef a\rig(\tr\lef c\rig\fa).
\)
\paragraph{Notation and convention.} For $A'$ a finite set of atoms 
we can \emph{represent $(A',<)$} by the string $\rho\in (A')^\si$ that contains all atoms of $A'$ 
in increasing order (thus $|\rho|=|A'|$). 
We will often just write \textbf{Cbf} instead of ``CL-basic form''.

In Example~\ref{ex:trees}, the terms displayed in the right column are Cbfs if  
$adbc$ agrees with $(A,<)$ or if $dacb$ does, respectively, 
and those in the left column are not 
because they do not satisfy the condition $P=F_{ad}(...)$ or $P=F_{da}(...)$, respectively. 
We give another example that further illustrates the role of $(A,<)$ and provides more intuition 
about the existence of unique Cbfs.

\begin{example}
\label{ex:trees2}
Consider the following mem-basic form $P$, which we display graphically:
\\[3mm]
\indent
\qquad
\begin{tikzpicture}[%
level distance=7.5mm,
level 1/.style={sibling distance=64mm},
level 2/.style={sibling distance=32mm},
level 3/.style={sibling distance=16mm},
level 4/.style={sibling distance=8mm},
level 5/.style={sibling distance=4mm}
]
\node (root) {$a$}
  child {node (lef1) {$b$}
    child {node (cld) {$c$}
      child {node (lef3r) {$e$}
        child {node (lef4l) {$d$}
          child {node (lef 51) {$\tr$}}
          child {node (lef 52) {$\fa$}}
          }
        child {node (lef4r) {$d$}
          child {node (lef 51) {$\tr$}}
          child {node (lef 52) {$\fa$}}
          }
        } 
      child {node (lef3r) {$\fa$}}
    }
    child {node (crd) {$c$}
      child {node (lef3r) {$\tr$}}
      child {node (rig3r) {$\fa$}}
    }
  }
  child {node (rig1) {$d$}
    child {node (lef2) {$e$}
      child {node (lef3) {$\tr$}} 
      child {node (rig3) {$\fa$}}
    }
    child {node (rig2) {$e$}
      child {node (lef3r) {$\fa$}} 
      child {node (rig3r) {$\tr$}}
    }
  };
\end{tikzpicture}
\\[1mm]
Clearly, $P$ is a mem-basic form and $\sha(P)=\{a\}$. 
Observe that some subterms have overlap and can be rearranged according to axiom~\eqref{eq:clcond}.
To obtain a unique CL-basic form, the two outer subterms $(\tr\lef d\rig\fa)\lef e\rig (\tr\lef d\rig\fa)$ 
and $(\tr\lef e\rig\fa)\lef d\rig (\tr\lef e\rig\fa)$
should be made uniform and it must be decided whether $b$ and the two adjacent $c$'s 
should be interchanged (with rearranged subterms) or not.

If $abcde$ agrees with $(A,<)$, the left subterm $(\tr\lef d\rig\fa)\lef e\rig (\tr\lef d\rig\fa)$ should be replaced by
\[(\tr\lef e\rig\tr)\lef d\rig (\fa\lef e\rig\fa),\] 
and if $acbde$ agrees with $(A,<)$, the left subterm $Q\lef b\rig R$ should be replaced by one with root $c$
that we display graphically:

\qquad
\begin{tikzpicture}[%
level distance=7.5mm,
level 1/.style={sibling distance=32mm},
level 2/.style={sibling distance=16mm},
level 3/.style={sibling distance=8mm},
level 4/.style={sibling distance=4mm}
]
\node (root) {$c$}
  child {node (lef1) {$b$}
    child {node (lef2) {$d$}
      child {node (lef3r) {$e$}
        child {node (lef4l) {$\tr$}} 
        child {node (lef4r) {$\tr$}}
        }  
      child {node (lef3r) {$e$}
        child {node (lef4l) {$\fa$}} 
        child {node (lef4r) {$\fa$}}
        } 
    }
    child {node (rig2) {$\tr$}}
  }
  child {node (rig1) {$b$}
    child {node (lef2) {$\fa$}}
    child {node (rig2) {$\fa$}}
  };
\end{tikzpicture}

Note that each alternative ordering $(A,<')$ determines how $P$ should be transformed into a provably 
equal Cbf because it determines whether $b<'c$ or $c<'b$ and $d<'e$ or $e<'d$, and
that $a$ is the only possible root atom of such a transformation.
\hfill\qedex
\end{example}

For $P$ a mem-basic form different from \tr\ or \fa,
a consequence of axiom~\eqref{eq:clcond} concerns a provably equal representation that is ordered
using $\sha(P)$.

\begin{lemma}
\label{la:markings}
For each mem-basic form $P\in\mBF$ 
and $\sigma\in(\sha(P))^\si$ that has maximal length and agrees with $(A,<)$,
there exist mem-basic forms $P_0,...,P_{2^{|\sigma|}-1}$ such that
\[
\CPcond\vdash P=F_\sigma(P_0,...,P_{2^{|\sigma|}-1}).
\]
\end{lemma}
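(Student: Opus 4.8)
The plan is to prove Lemma~\ref{la:markings} by induction on the structure of $P$, mirroring the argument of Lemma~\ref{la:sha} but iterating it to extract the \emph{entire} shared alphabet rather than a single atom. If $P\in\{\tr,\fa\}$, then $\sha(P)=\emptyset$, the only $\sigma$ available is $\epsilon$, and $F_\epsilon(P)=P$, so the base case is immediate. For the inductive step, write $P=P_1\lef a\rig P_2$, so that $a\in\sha(P)$ and $\sha(P)=\{a\}\cup(\sha(P_1)\cap\sha(P_2))$.

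The key move is: first peel off the atoms of $\sha(P)$ that are distinct from $a$. Let $\tau\in(\sha(P)\setminus\{a\})^\si$ be the maximal-length string agreeing with $(A,<)$; note $\sha(P)\setminus\{a\}\subseteq\sha(P_1)$ and $\subseteq\sha(P_2)$, so $\tau$ is a string of atoms each belonging to both $\sha(P_1)$ and $\sha(P_2)$. Apply the induction hypothesis to $P_1$ and to $P_2$ (strictly smaller), using the string $\tau$ in each case: this is legitimate because $\tau$ has maximal length among strings over $\sha(P_1)\cap\sha(P_2)\supseteq\sha(P)\setminus\{a\}$ agreeing with $(A,<)$, and in fact we only need \emph{some} such ordered rearrangement, which the induction hypothesis supplies (one may need to state the induction hypothesis for an arbitrary repetition-free string of atoms all lying in $\sha(P_i)$, not necessarily maximal — this is a routine strengthening, since dropping atoms from a maximal $\sigma$ only reduces the set of leaves to reorganise). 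This gives
\[
\CPcond\vdash P_1=F_\tau(R_0,\dots,R_{2^{|\tau|}-1}),\qquad
\CPcond\vdash P_2=F_\tau(S_0,\dots,S_{2^{|\tau|}-1})
\]
for suitable mem-basic forms $R_i,S_i$. Substituting into $P=P_1\lef a\rig P_2$ and then repeatedly applying axiom~\eqref{eq:clcond} to push the $a$-test inward past each $\lef b\rig$ layer of $F_\tau$ — exactly the computation in the last display of the proof of Lemma~\ref{la:sha}, now performed $|\tau|$ times — we obtain $P$ provably equal to $F_{a\tau}(R_0,\dots,R_{2^{|\tau|}-1},S_0,\dots,S_{2^{|\tau|}-1})$. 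Since $a\in\sha(P)$ is either the $<$-least element of $\sha(P)$ or not, one finishes by a final rearrangement: if $\sigma$ (the target maximal string agreeing with $(A,<)$) is $a\tau$ we are done; otherwise $a$ occurs in the interior of $\sigma$, and one commutes $a$ leftward or rightward past the adjacent tests of $F_\tau$ using \eqref{eq:clcond} again, reindexing the leaves accordingly, until the atoms appear in increasing order. Each such commutation is an instance of \eqref{eq:clcond} with appropriately chosen subterms, so it stays within $\CPcond$.

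The main obstacle is bookkeeping rather than conceptual: one must track precisely how the $2^{|\tau|}$ leaves $R_i$ and $S_i$ get interleaved and reindexed when $a$ is commuted into its correct position in the ordering, and verify that each step is genuinely an instance of \eqref{eq:clcond} (which requires the atom being moved across to not already occur in the relevant subterms — guaranteed because after applying the induction hypothesis the atoms of $\tau$ appear exactly once along each path, and mem-basic forms have no repeated atoms along any path). A clean way to manage this is to prove the commutation step as a small auxiliary claim: if $b\ne a$ and neither $b$ nor $a$ occurs in the $T_i$, then $\CPcond\vdash F_{ab\rho}(\vec T)=F_{ba\rho}(\vec T')$ for an explicit permutation of the $T_i$, and then observe that sorting a repetition-free string by adjacent transpositions realises $F_{a\tau}(\dots)$ as $F_\sigma(\dots)$. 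I would state and use this auxiliary claim, relegate its verification to a one-line appeal to \eqref{eq:clcond}, and keep the main induction short.
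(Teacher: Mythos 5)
Your outline can be made to work, but it takes a genuinely different and substantially heavier route than the paper's. The paper's proof is a short induction on the depth of $P$: writing $\sigma=a\rho$, so that $a$ is the $<$-least element of $\sha(P)$, it first hoists $a$ to the root via Lemma~\ref{la:sha}, obtaining $\CPcond\vdash P=\Le_a(P)\lef a\rig\Ri_a(P)$, and then applies the induction hypothesis with the tail string $\rho$ to the two $a$-reducts, which have smaller depth; because the atoms are peeled off in increasing order from the start, no final sorting step and no permutation bookkeeping are needed. You instead split at the syntactic root of $P$, and that single choice forces both complications you then have to manage: (1) the root atom $a$ need not be $<$-least in $\sha(P)$, so you must commute it into place by adjacent transpositions, and (2) you must apply the induction hypothesis to $P_1,P_2$ with the non-maximal string $\tau$ over $\sha(P)\setminus\{a\}$, which is a genuine strengthening of the statement. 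Point (1) is sound: each transposition $F_{\dots ab\dots}\mapsto F_{\dots ba\dots}$ is a family of instances of \eqref{eq:clcond} applied at the appropriate depth. Point (2) is where your justification is too quick: ``dropping atoms from a maximal $\sigma$ only reduces the set of leaves to reorganise'' is not accurate, because when $\tau$ is a non-prefix subsequence of the maximal string for $\sha(P_1)$, converting $F_{\sigma'}(\vec Q)$ into $F_\tau(\vec R)$ requires pushing the omitted atoms down past the retained ones, i.e.\ further \eqref{eq:clcond} rewriting; the clean way to prove the strengthened claim is to peel off the atoms of $\tau$ one at a time using Lemma~\ref{la:sha}, which is precisely the paper's argument. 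So your proposal is correct in outline, but once its deferred steps are discharged it essentially contains the paper's proof as a subroutine with sorting machinery added on top; invoking Lemma~\ref{la:sha} directly on the $<$-least element of $\sha(P)$ makes both auxiliary claims unnecessary.
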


\begin{proof}
By induction on the depth of $P$. The base case $d(P)=0$ is trivial. 
If $d(P)\geq1$ and
$\sigma=a\rho$, then $\rho$ represents the ordering of the shared alphabets of
the mem-basic forms $\Le_{a}(P)$ and $\Ri_{a}(P)$ that
are both of lesser depth, hence
\begin{align*}
\CPcond\vdash P&=\Le_{a}(P)\lef a\rig\Ri_{a}(P)
&&\text{(by Lemma~\ref{la:sha})}\\
&=F_{\rho}(P_0,...,P_{2^{|\rho|}-1})\lef a\rig F_{\rho}(P_0',...,P_{2^{|\rho|}-1}')
&&\text{(IH)}\\
&=F_{\sigma}(P_0,...,P_{2^{|\sigma|}-1})
\end{align*}
if $P_{2^{|\rho|}+i}=P_i'$ for $i=0,...,2^{|\rho|}-1$.
\end{proof}

\begin{theorem}
\label{thm:zak3}
For each mem-basic form $P\in\mBF$ there is a unique 
CL-basic form $Q$ such that $\CPcond\vdash P=Q$.
\end{theorem}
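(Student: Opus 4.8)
The plan is to prove existence and uniqueness separately, both by induction on the depth of the mem-basic form $P$, using $\sha()$ as the organising invariant.

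\textbf{Existence.} I would argue by induction on $d(P)$. If $P\in\{\tr,\fa\}$ then $\sha(P)=\emptyset$, the unique $\sigma$ of length $0$ agreeing with $(A,<)$ is $\epsilon$, and $P=F_\epsilon(P)$ is already a Cbf. For $d(P)\geq 1$, let $\sigma\in(\sha(P))^\si$ be the word of maximal length $|\sha(P)|$ agreeing with $(A,<)$. By Lemma~\ref{la:markings}, $\CPcond\vdash P=F_\sigma(P_0,\dots,P_{2^{|\sigma|}-1})$ for suitable mem-basic forms $P_i$. Each $P_i$ arises (tracing the construction in the proofs of Lemmas~\ref{la:markings} and~\ref{la:sha}) by repeated application of $\Le_a()$ and $\Ri_a()$ to $P$, hence is a mem-basic form of depth strictly less than $d(P)$ — this is the point where I must be careful, since I need a genuine depth drop. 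Then the induction hypothesis gives a Cbf $Q_i$ with $\CPcond\vdash P_i=Q_i$, and I claim $Q:=F_\sigma(Q_0,\dots,Q_{2^{|\sigma|}-1})$ is a Cbf: it is a mem-basic form because $\sigma$ lists exactly the shared alphabet and each $Q_i$ avoids those atoms (no atom of $\sigma$ occurs in any $P_i$, as $\sigma$ exhausts $\sha(P)$, and Cbf-ness of $Q_i$ is inherited), and the defining clause of Definition~\ref{def:clbasic} holds at the top level by construction and recursively by the $Q_i$ being Cbfs. Finally $\CPcond\vdash P=F_\sigma(P_0,\dots)=F_\sigma(Q_0,\dots)=Q$.

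\textbf{Uniqueness.} Suppose $Q,Q'$ are both Cbfs with $\CPcond\vdash Q=Q'$; I show $Q\equiv Q'$ (syntactic identity) by induction on $d(Q)$. By Lemma~\ref{la:cta}, $\sha(Q)=\sha(Q')$, so both are built with the same $\sigma$: $Q=F_\sigma(Q_0,\dots,Q_{2^{|\sigma|}-1})$ and $Q'=F_\sigma(Q_0',\dots,Q_{2^{|\sigma|}-1}')$ with all components Cbfs. If $\sigma=\epsilon$ then $Q=Q_0$, $Q'=Q_0'$ have shared alphabet $\emptyset$, so $Q,Q'\in\{\tr,\fa\}$, and $\CPcond\vdash Q=Q'$ forces $Q\equiv Q'$ (the free/memorising basic-form uniqueness results, or a simple valuation argument, separate $\tr$ from $\fa$). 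Otherwise $\sigma=a\rho$; since $\CPcond$ satisfies the memorising axioms and $Q,Q'$ have root $a$ not occurring in their children, applying $\Le_a()$ and $\Ri_a()$ and using that these are respected by $\CPcond$-provable equality (via Lemma~\ref{la:sha} / the congruence properties) yields $\CPcond\vdash F_\rho(Q_0,\dots)=F_\rho(Q_0',\dots)$ for the left halves and likewise for the right halves, i.e. the two sub-Cbfs indexed by $\rho$ are $\CPcond$-equal; the induction hypothesis (they have smaller depth) gives componentwise identity $Q_i\equiv Q_i'$, hence $Q\equiv Q'$.

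\textbf{Main obstacle.} The delicate point is uniqueness, specifically justifying that $\CPcond$-provable equality of two Cbfs $F_{a\rho}(\dots)$ with common root $a$ descends to $\CPcond$-provable equality of the halves $F_\rho(Q_0,\dots)$ and $F_\rho(Q_0',\dots)$. One clean way is to apply $\Le_a$ and $\Ri_a$: Lemma~\ref{la:sha} shows $\CPcond\vdash Q=\Le_a(Q)\lef a\rig\Ri_a(Q)$, and for a Cbf with root $a$ one has $\Le_a(Q)=F_\rho(Q_0,\dots)$ and $\Ri_a(Q)=F_\rho(Q_{2^{|\rho|}},\dots)$ literally; then I need that $\Le_a$ and $\Ri_a$ send $\CPcond$-provably equal mem-basic forms to $\CPcond$-provably equal mem-basic forms, which should follow because every $\CPcond$-axiom instance between mem-basic forms is preserved by these substitutions (and $\CPmem$ itself is, as $\Le_a,\Ri_a$ are essentially substitutions $a\mapsto\tr$, $a\mapsto\fa$ up to $\CPmem$). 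I would isolate this preservation as a small sublemma before the main proof, or appeal to it as a known property of mem-basic forms from the cited completeness results for $\CPmem$.
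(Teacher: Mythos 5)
Your proposal is correct and follows essentially the same route as the paper: induction on $d(P)$, decomposition via Lemma~\ref{la:markings} into $F_\sigma(P_0,\dots,P_{2^{|\sigma|}-1})$ for the $\sigma$ determined by $\sha(P)$ and $(A,<)$, the induction hypothesis on the components, and reassembly. The only difference is that you spell out the uniqueness descent (same $\sigma$ by Lemma~\ref{la:cta}, then componentwise equality via $\Le_a$/$\Ri_a$ preserving $\CPcond$-provable equality), which the paper leaves implicit in its one-line ``is the unique Cbf'' conclusion; your identification of that step as the delicate point, and your proposed sublemma for it, are sound.
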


\begin{proof}
By induction on the depth of $P$. 
If $d(P)=0$ then $P\in\{\tr,\fa\}$ and we are done.

If $d(P)>0$ then $|\sha(P)|=n$ for some $n\geq1$. 
For the unique $\sigma\in(\sha(P))^\si$ of length $n$ that agrees with $(A,<)$
and Lemma~\ref{la:markings},
$\CPcond\vdash P=F_\sigma(P_0,...,P_{2^n-1})$. 
By induction, there are unique Cbfs $Q_i$ such that $\CPcond\vdash P_i=Q_i$. 
Hence, $Q=F_\sigma(Q_0,...,Q_{2^n-1})$ is the unique
Cbf that satisfies $\CPcond\vdash P=Q$.
\end{proof}

In~\cite[Thm.5.7]{BP17} we prove that for each $P\in\PS$, there is a unique mem-basic form obtained
by a normalisation function $\membf()$ such that
$\CPmem\vdash P = \membf(P)$. Hence, we find the following result:

\begin{corollary}
\label{cor:cl2}
For all $P,Q\in\PS$, 
$\CPcond\vdash P=Q$
if, and only if, 
\[
\llbracket P\rrbracket_\cond=\llbracket Q\rrbracket_\cond, 
\text{where $\llbracket P\rrbracket_\cond$ is $P$'s  unique CL-basic form}.
\]
\end{corollary}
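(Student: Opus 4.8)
The plan is to lift the normalisation of Theorem~\ref{thm:zak3} from mem-basic forms to arbitrary conditional expressions in \PS\ by composing it with the mem-basic normalisation $\membf$ from~\cite{BP17}. Concretely, for $P\in\PS$ I define $\llbracket P\rrbracket_\cond$ to be the unique CL-basic form that, by Theorem~\ref{thm:zak3}, satisfies $\CPcond\vdash\membf(P)=\llbracket P\rrbracket_\cond$; this is well defined because $\membf(P)\in\mBF$. The first thing to record is that $\CPcond\vdash P=\llbracket P\rrbracket_\cond$: indeed $\CPmem\vdash P=\membf(P)$ by~\cite{BP17}, and $\CPmem\subseteq\CPcond$, so $\CPcond\vdash P=\membf(P)=\llbracket P\rrbracket_\cond$.

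The ``if'' direction is then immediate: if $\llbracket P\rrbracket_\cond=\llbracket Q\rrbracket_\cond$, then $\CPcond\vdash P=\llbracket P\rrbracket_\cond=\llbracket Q\rrbracket_\cond=Q$ by transitivity of provable equality.

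For the ``only if'' direction, assume $\CPcond\vdash P=Q$. Then $\CPcond\vdash\llbracket P\rrbracket_\cond=P=Q=\llbracket Q\rrbracket_\cond$. Now $\llbracket P\rrbracket_\cond$ is itself a mem-basic form (in fact a CL-basic form), and both $\llbracket P\rrbracket_\cond$ and $\llbracket Q\rrbracket_\cond$ are CL-basic forms that are $\CPcond$-provably equal to the mem-basic form $\llbracket P\rrbracket_\cond$ --- the former trivially by reflexivity, the latter by the derivation just given. Hence the uniqueness clause of Theorem~\ref{thm:zak3}, applied to the mem-basic form $\llbracket P\rrbracket_\cond$, forces $\llbracket P\rrbracket_\cond=\llbracket Q\rrbracket_\cond$. (The same reasoning shows en passant that $\llbracket P\rrbracket_\cond$ really is \emph{the} unique CL-basic form $\CPcond$-provably equal to $P$, as claimed in the statement.)

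As for obstacles: there is essentially no new difficulty here, since all the real content already sits in Theorem~\ref{thm:zak3} and in~\cite{BP17}; in particular no fresh semantic model is needed. The one point that deserves care --- and that I would flag --- is that it is the \emph{uniqueness} half of Theorem~\ref{thm:zak3}, not merely existence, that drives the ``only if'' direction, via the observation that a CL-basic form is its own CL-basic form, so any CL-basic form $\CPcond$-provably equal to it must be syntactically identical to it.
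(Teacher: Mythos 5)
Your proposal is correct and follows essentially the same route as the paper: the paper likewise obtains the corollary by composing the mem-basic normalisation $\membf$ of~\cite{BP17} (so that $\CPmem\vdash P=\membf(P)$, hence $\CPcond\vdash P=\membf(P)$) with the existence-and-uniqueness statement of Theorem~\ref{thm:zak3}. You merely spell out the transitivity and uniqueness steps that the paper leaves implicit, including the correct observation that the ``only if'' direction rests on the uniqueness half of Theorem~\ref{thm:zak3}.
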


Corollary~\ref{cor:cl2} and the above theorem imply that 
\CPcond\ axiomatises two-valued
conditional valuation congruence, that is, the congruence defined by having equal CL-basic forms.

It remains to be shown that \CLSCLtwo\ is completely axiomatised by the equational logic \CLtwo. 
Given a signature $\Sigma$, we write 
\(\mathbb T_{\Sigma,\cal X}\)
for the set of open terms over 
$\Sigma$ with variables in $\cal X$. 
As in~\cite{BPS21}, we define the following translation functions:
\\[-6mm]
\begin{description}
\item
[\quad$f:\TSCL\to\TCP$] by 
\begin{align*}
f(B)&=bl~\text{ for }B\in\{\tr,\fa\},
&f(\neg t)&=\fa\lef f(t)\rig\tr,\\
f(a)&=a~\text{ for }a\in A,
&f(t_1\leftand t_2)&=f(t_2)\lef f(t_1)\rig\fa,\\
f(x)&=x~\text{ for }x\in \cal X,
&f(t_1\leftor t_2)&=\tr\lef f(t_1)\rig f(t_2).
\hspace{32mm}
\end{align*}
\item
[\quad$g:\TCP\to\TSCL$] by 
\begin{align*}
g(B)&=B~\text{ for }B\in\{\tr,\fa\},
&g(x)&=x~\text{ for }x\in \cal X,
\\
g(a)&=a~\text{ for }a\in A,
&g(t_1\lef t_2\rig t_3)&=(g(t_2)\leftand g(t_1))\leftor(\neg g(t_2)\leftand g(t_3)).
\end{align*}
\end{description}

\begin{theorem}
\label{thm:CLcomplete}
For all terms $s,t$ over $\SigSCL$, $\CLtwo\vdash s=t~\iff~\CLSCLtwo\vdash s=t$.
\end{theorem}

\begin{proof} 
In~\cite[Thm.6.5]{BPS21} it is proved that for all terms $s,t$ over $\SigSCL$, 
\[\MSCLe\vdash s=t~\iff~\MSCL\vdash s=t\]
and the two crucial intermediate results in this proof, when lifted to the present case, are
\\
1) for all $s,t\in\TCP$, $\CPcond\vdash s=t~\Rightarrow~ \CLtwo\vdash g(s)=g(t)$, and
\\
2) for all $t\in\TSCL$, $\CLtwo\vdash g(f(t))=t$.

For result 1 it must be shown that the axioms of \CLtwo\ are derivable from \CLSCLtwo. 
By Theorem~\ref{thm:indep2}, $\CLtwo\vdash\MSCLe$, so it remains to be shown 
that axiom~\eqref{eq:clcond} is derivable from $\CLtwo$, and this follows from 
Lemma~\ref{la:convenient} because axiom~\eqref{eq:c1} is the
$g()$-translation of axiom~\eqref{g}.
Result 2 holds for \MSCLe, and hence for \CLtwo.
\end{proof}

Finally, we can use CL-basic forms to define ``normal forms'' in the equational theory of \CLtwo:~\footnote{%
   The term ``normal form'' is not used in~\cite{GS90}, but Guzmán's follow-up 
   paper~\cite{Guz94} states that the completeness of a Gentzen system for
   conditional logic
   ``relies heavily on the axiomatization 
   of the equational 
   theory of $C$, and on the normal form in this equational theory; see~\cite{GS90}. 
   The author believes that an understanding of this normal form, as presented in section 3 of~\cite{GS90}, 
   is essential in order to see through the formalisms in the proofs of (2.3), (2.7) and (2.10).''
   }
for $P\in\SP$ we find $f(P)\in\PS$,
and thus, given an ordering $(A,<)$, a unique CL-basic form $\membf(f(P))$ such that 
$\CPcond\vdash f(P)=\membf(f(P))$ and thus
\[\CLtwo\vdash P=g(f(P))=g(\membf(f(P))),\]
so $g(\membf(f(P)))$ can be defined as the normal form of $P$.

\section{Conditional short-circuit logic, the three-valued case}
\label{sec:4}
In this section we consider the three-valued setting and start from mem-basic forms that may contain \und,
thus the set \mBFu\ of mem-basic forms. 
We proceed as in Section~\ref{sec:3} and it will appear that the CL-basic forms with \und\ are a bit more complex.

\medskip

\begin{definition}
\label{CPcondu}
Let \CPcondu\ be the extension of \CPcond\ with axiom \eqref{cpu}, i.e. $x\lef\und\rig y=\und$.
\\[1mm]
\textbf{Conditional short-circuit logic with undefinedness $(\CLSCLu)$}
is the short-circuit logic that implies no other 
consequences than those of the module expression 
\[
\{\tr,\und,\neg,\leftand\}\export(\CPcondu\cup\{\eqref{defneg},\eqref{defand}\}).
\]
\end{definition}

A typical derivation and a typical example in \CPcondu:
\begin{align*}
\CPcondu\vdash\und\lef x\rig\und
&=(y\lef \und\rig z)\lef x\rig(v\lef\und\rig w)\\
&=(y\lef x\rig v)\lef\und\rig(z\lef x\rig w)\\
&=\und,
\\[2mm]
\CPcondu\vdash (\fa\lef a\rig\und)\lef b\rig(\tr\lef a\rig\und)
&= (\fa\lef b\rig \tr)\lef a\rig(\und\lef b\rig\und)\\
&=(\fa\lef b\rig \tr)\lef a\rig\und.
\end{align*}
This example motivates the following definitions of the extension of $\sha(P)$ to the three-valued case
and of the CL-basic forms.
It will be seen that $(\fa\lef b\rig \tr)\lef a\rig(\und\lef b\rig\und)$ is the basic form according 
to the ordering $a<b$, while this is $(\fa\lef a\rig\und)\lef b\rig(\tr\lef a\rig\und)$ 
if $b<a$. 
Thus, the shared alphabet of \emph{each} of the three terms in this example must be $\{a,b\}$.

\begin{definition}
For $P$ a mem-basic form, 
its \textbf{shared alphabet} $\sha(P)$
is the following set of atoms, defined with help of the auxiliary function 
$f:\mBFu\times 2^A\to 2^A$:
\begin{align*}
&\sha(P)
=f(P,\alpha(P)),\\
&f(\und,x)=x,\\
&f(\tr,x)=f(\fa,x)=\emptyset, \\
&f(P_1\lef a\rig P_2,x)
=\{a\}\cup(f(P_1,x\setminus\{a\})\cap f(P_2,x\setminus\{a\})).
\end{align*}
\end{definition}

For each $\sigma\in A^\si$, we write $F_\sigma(\vec\und)$ for
$F_\sigma(P_0,...,P_{2^{|\sigma|}-1})$ if for all $i$, $P_i=\und$ (where $|\sigma|$ determines the 
length of $\vec\und$).

\begin{lemma}
\label{la:U}
If $P$ is a mem-basic form with only \und-leaves, then (i) $\sha(P)=\alpha(P)$ and 
(ii) for all $\sigma\in A^\si$, $\CPcondu\vdash P=F_\sigma(\vec\und)=\und$.
\end{lemma}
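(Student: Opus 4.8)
```latex
The plan is to prove both parts by structural induction on $P$, using the
recursive definition of $\sha$ for the three-valued case and the derivation
pattern illustrated just before the lemma (the ``typical derivation'' showing
$\CPcondu\vdash\und\lef x\rig\und=\und$). Both parts are proved simultaneously,
since part (i) is needed to justify the applications of axiom~\eqref{eq:clcond}
in part (ii).

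For part (i), the base case $P=\und$ gives $\alpha(\und)=\emptyset$... wait,
more carefully: if $P$ has only \und-leaves then $\alpha(P)$ is exactly the set
of atoms occurring in $P$. The base case is $P=\und$, where $\sha(\und)=f(\und,\alpha(\und))=f(\und,\emptyset)=\emptyset=\alpha(\und)$. For the
inductive step $P=P_1\lef a\rig P_2$, both $P_1$ and $P_2$ have only \und-leaves,
so by the induction hypothesis $\sha(P_i)=\alpha(P_i)$. The key subtlety is that
the auxiliary function $f$ is called with the second argument $\alpha(P)$, not
$\alpha(P_i)$, so I must first check that $f(P_i,\alpha(P)\setminus\{a\})=\alpha(P_i)\setminus\{a\}=\alpha(P_i)$ when $P_i$ has only \und-leaves;
this follows from a straightforward sub-induction showing that for such $P_i$
and any $x\supseteq\alpha(P_i)$, $f(P_i,x)=\alpha(P_i)$ — every path ends in
\und, which returns its second argument, and the intersection over all paths of
sets each containing $\alpha(P_i)$ (after removing the atoms seen along that
path) recovers exactly $\alpha(P_i)$. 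Then $\sha(P)=\{a\}\cup(\alpha(P_1)\cap\alpha(P_2))$, which I reconcile with $\alpha(P)=\{a\}\cup\alpha(P_1)\cup\alpha(P_2)$:
since $P$ is a mem-basic form, $a\notin\alpha(P_1)\cup\alpha(P_2)$, but this
alone does not give $\alpha(P_1)=\alpha(P_2)$. Here I use part (ii) of the
induction hypothesis together with Lemma~\ref{la:markings}/the rearrangement
lemma: since both $P_1$ and $P_2$ are provably equal to $F_\sigma(\vec\und)$ for
$\sigma$ an enumeration of their respective alphabets, and more to the point
$P=P_1\lef a\rig P_2$ can be rearranged so that any atom common to one side can
be pulled to the root — so in fact the claim that $\alpha(P_1)=\alpha(P_2)$ is
\emph{not} needed; rather $\sha(P)=\{a\}\cup(\alpha(P_1)\cap\alpha(P_2))$ and
$\alpha(P)=\{a\}\cup\alpha(P_1)\cup\alpha(P_2)$ will coincide only after we
establish that $P$ collapses to $F_\sigma(\vec\und)$, forcing $\alpha(P)$ to be
genuinely shared. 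The cleanest route, then, is to prove (ii) first by a direct
computation and read off (i) as a consequence.

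For part (ii), I would argue by induction on $d(P)$. If $d(P)=0$ then $P=\und$
and $F_\epsilon(\und)=\und$ trivially. If $d(P)\geq 1$, write $P=P_1\lef a\rig P_2$
with $P_1,P_2$ of smaller depth, each with only \und-leaves; by the induction
hypothesis, for \emph{every} word over their atoms — in particular for any
$\sigma=a\rho$ with $\rho$ avoiding $a$ — we have $\CPcondu\vdash P_i=F_\rho(\vec\und)$.
Hence $\CPcondu\vdash P=F_\rho(\vec\und)\lef a\rig F_\rho(\vec\und)=F_{a\rho}(\vec\und)$.
To reach an arbitrary $\sigma\in A^\si$ I then invoke the \eqref{eq:clcond}-based
rearrangement (as in Lemma~\ref{la:markings}): repeatedly permuting adjacent
levels via \eqref{eq:clcond} and eliminating any leftover $\lef\cdot\rig$ whose
branches are identical all-\und\ subterms via $x\lef\und\rig y=\und$ and
\eqref{CPmem}, I can bring $P$ to $F_\sigma(\vec\und)$ for any chosen $\sigma$,
and then collapse it to $\und$ by applying $x\lef b\rig x=x$ (itself a
\CPmem-consequence: $x\lef b\rig x = x\lef b\rig(y\lef b\rig x)$ reduced) from
the leaves upward, or more directly by the displayed derivation
$\und\lef x\rig\und=(y\lef\und\rig z)\lef x\rig(v\lef\und\rig w)=(y\lef x\rig v)\lef\und\rig(z\lef x\rig w)=\und$
applied inductively.

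The main obstacle I anticipate is the bookkeeping in the rearrangement step:
making precise that \eqref{eq:clcond} plus \eqref{cpu} plus memorisation suffice
to transform $F_{a\rho}(\vec\und)$ into $F_\sigma(\vec\und)$ for an arbitrary
$\sigma$ enumerating $\alpha(P)$, including the case where $\sigma$ uses a
different root atom than $a$ — this requires knowing $a\in\sha(P)$, which is part
(i), so the two parts are genuinely intertwined. I would resolve the circularity
by doing a single combined induction on $d(P)$ proving (i) and (ii) together: at
depth $d$, the induction hypothesis for depth $<d$ gives both facts for $P_1,P_2$,
from which (ii) for $P$ at the specific word $a\rho$ follows immediately, then (i)
for $P$ follows because $P=\und$ forces $\alpha(P_1)=\alpha(P_2)$ (both equal
$\alpha(P)\setminus\{a\}$, since otherwise an atom occurring in only one branch
could not be eliminated and $P$ would not collapse — but it does collapse by
(ii)), and finally (ii) for $P$ at an arbitrary $\sigma$ follows from (i) via the
rearrangement. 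The remaining work is the routine verification that each
rearrangement move is a \CPcondu-instance, which I would state without grinding
through it.
```
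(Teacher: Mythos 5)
Your treatment of part (i) contains a genuine error. The sub-induction you propose --- that $f(Q,x)=\alpha(Q)$ for every mem-basic form $Q$ with only \und-leaves and every $x\supseteq\alpha(Q)$ --- is false: already $f(\und,x)=x$, not $\emptyset=\alpha(\und)$. The correct invariant is $f(Q,x)=x$ for all $x\supseteq\alpha(Q)$ (base case $f(\und,x)=x$; inductive step $f(Q_1\lef b\rig Q_2,x)=\{b\}\cup\bigl((x\setminus\{b\})\cap(x\setminus\{b\})\bigr)=x$, using $b\in x$ and $\alpha(Q_i)\subseteq x\setminus\{b\}$), and with it part (i) is immediate: $\sha(P)=f(P,\alpha(P))=\alpha(P)$. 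Because you start from the wrong invariant you land on $\sha(P)=\{a\}\cup(\alpha(P_1)\cap\alpha(P_2))$ and then try to rescue the argument with the claim that $\alpha(P_1)=\alpha(P_2)=\alpha(P)\setminus\{a\}$. That claim is also false --- $\und\lef a\rig(\und\lef b\rig\und)$ is a mem-basic form with only \und-leaves in which $\alpha(P_1)=\emptyset\ne\{b\}=\alpha(P_2)$, yet its shared alphabet is $\{a,b\}=\alpha(P)$ --- and your justification for it (that provable equality to $\und$ forces the branch alphabets to agree) cannot work, since $\CPcondu$-provable equality preserves neither alphabets nor $\sha$ (the paper notes $\sha(\und)=\emptyset$ while $\sha(\und\lef a\rig\und)=\{a\}$). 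The resulting entanglement of (i) with (ii) is entirely self-inflicted: (i) is a purely combinatorial statement about the function $\sha$ and needs no provability input at all.

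Part (ii) is essentially correct but heavily overengineered. Since every term in sight is to be proved equal to the constant $\und$, no rearrangement via \eqref{eq:clcond}, no appeal to Lemma~\ref{la:markingsu}, and no choice of root atom is needed: one shows by a two-line structural induction that $\CPcondu\vdash P=\und$ (applying the displayed derivation $\und\lef x\rig\und=\und$ at each node), and separately that $\CPcondu\vdash F_\sigma(\vec\und)=\und$ for every $\sigma\in A^\si$, since $F_{a\rho}(\vec\und)=F_\rho(\vec\und)\lef a\rig F_\rho(\vec\und)$ collapses the same way. Your parenthetical ``or more directly by the displayed derivation applied inductively'' is exactly the intended argument --- the paper's proof is simply that both parts follow by structural induction on $P$ --- and the ``main obstacle'' you anticipate (bookkeeping for transforming $P$ into $F_\sigma(\vec\und)$ by tree surgery) is a non-obstacle.
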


\begin{proof}
Both (i) and (ii) follow by structural induction on $P$.
\end{proof}

\begin{lemma}
\label{la:shaa}
If $P$ is a mem-basic form 
and $a\in\sha(P)$, then 
\(
\CPcondu\vdash P=\Le_a(P)\lef a\rig \Ri_a(P).
\)
\end{lemma}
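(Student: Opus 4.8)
The plan is to follow the proof of the two-valued Lemma~\ref{la:sha}, but a direct structural induction on $P$ does not quite work in the present setting: for $P=P_1\lef b\rig P_2$ with $b\neq a$, unfolding $\sha(P)=f(P,\alpha(P))$ yields only $a\in f(P_i,\alpha(P)\setminus\{b\})$ for $i=1,2$, and since $\alpha(P)\setminus\{b\}$ may strictly contain $\alpha(P_i)$ (the $\und$-leaves make $f$ depend on the whole set passed down, and $f$ is monotone in that set in the wrong direction), this does \emph{not} give $a\in\sha(P_i)=f(P_i,\alpha(P_i))$. So the first step I would take is to strengthen the statement, abstracting away the alphabet, to the following claim: for every mem-basic form $P$, every $x\subseteq A$, and every $a\in f(P,x)$, one has $\CPcondu\vdash P=\Le_a(P)\lef a\rig\Ri_a(P)$. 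Lemma~\ref{la:shaa} is then exactly the instance $x=\alpha(P)$, because $\sha(P)=f(P,\alpha(P))$.

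I would prove this claim by structural induction on $P$. The cases $P=\tr$ and $P=\fa$ are vacuous, since $f(\tr,x)=f(\fa,x)=\emptyset$. The case $P=\und$ is where the abstraction pays off: here $f(\und,x)=x$ and $\Le_a(\und)=\Ri_a(\und)=\und$, so the goal reduces to $\CPcondu\vdash\und=\und\lef a\rig\und$, which is immediate from \eqref{cpu} and \eqref{eq:clcond} — it is the typical derivation displayed just before Lemma~\ref{la:U}, and also an instance of Lemma~\ref{la:U}(ii). For $P=P_1\lef b\rig P_2$ with $b=a$ the claim is trivial, since $\Le_a(P)=P_1$ and $\Ri_a(P)=P_2$. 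For $b\neq a$, from $a\in f(P,x)=\{b\}\cup\bigl(f(P_1,x\setminus\{b\})\cap f(P_2,x\setminus\{b\})\bigr)$ and $a\neq b$ I obtain $a\in f(P_i,x\setminus\{b\})$ for $i=1,2$, so the induction hypothesis applies to the subterms $P_1,P_2$ (with the set $x\setminus\{b\}$) and gives $\CPcondu\vdash P_i=\Le_a(P_i)\lef a\rig\Ri_a(P_i)$; substituting and then applying axiom~\eqref{eq:clcond} with $y\mapsto a$ and $u\mapsto b$, the derivation would be
\begin{align*}
\CPcondu\vdash P&=(\Le_a(P_1)\lef a\rig\Ri_a(P_1))\lef b\rig(\Le_a(P_2)\lef a\rig\Ri_a(P_2))\\
&=(\Le_a(P_1)\lef b\rig\Le_a(P_2))\lef a\rig(\Ri_a(P_1)\lef b\rig\Ri_a(P_2))\\
&=\Le_a(P)\lef a\rig\Ri_a(P),
\end{align*}
the last step using the defining clauses of $\Le_a$ and $\Ri_a$ for $b\neq a$. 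This reproduces, in \CPcondu, precisely the computation in the proof of Lemma~\ref{la:sha}.

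The only genuine obstacle is locating the right generalisation. Keeping $\sha(P)$ as the hypothesis makes the induction fail exactly at the $\und$-leaves; quantifying over all $x$ in $f(P,x)$ turns that case into a one-line use of~\eqref{cpu} and makes the remainder a routine adaptation of the two-valued argument — note also that $\Le_a$ and $\Ri_a$ being total on $\BFu$ means no side condition on the occurrence of $a$ is needed. Everything after the generalisation is bookkeeping.
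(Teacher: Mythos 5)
Your proof is correct, and the core computation (substitute the induction hypotheses into the two branches, then commute $a$ outward with axiom~\eqref{eq:clcond}) is exactly the one the paper intends when it says the proof is ``like the proof of Lemma~\ref{la:sha}.'' The genuine added value in your write-up is the observation that the two-valued argument does \emph{not} transfer verbatim: because the three-valued $\sha$ is defined via $f(P,x)$ with the ambient alphabet threaded through, and $f(\und,x)=x$, knowing $a\in f(P_i,\alpha(P)\setminus\{b\})$ does not yield $a\in\sha(P_i)$ (e.g.\ $P=\und\lef b\rig(\tr\lef a\rig\fa)$ has $a\in\sha(P)$ but $\sha(\und)=\emptyset$ — the same phenomenon the paper itself points out right after the lemma when noting that $\sha$ is not preserved by provable equality). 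Your strengthened claim, quantifying over all $x$ with hypothesis $a\in f(P,x)$, is the right generalisation: it turns the problematic $\und$-leaf case into the derivable identity $\CPcondu\vdash\und=\und\lef a\rig\und$ (the ``typical derivation'' displayed before Lemma~\ref{la:U}), and the remaining cases are the two-valued argument unchanged. So you have not taken a different route; you have filled in a step that the paper's one-line proof quietly assumes.
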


\begin{proof}
Like the proof of Lemma~\ref{la:sha}.
\end{proof}

We note that the analogue of Lemma~\ref{la:cta} is not valid: 
if $P$ and $Q$ are mem-basic forms and 
$\CPcondu\vdash P=Q$, then $\sha(P)=\sha(Q)$ does not necessarily hold, e.g $\sha(\und)=\emptyset$ and 
$\sha(\und\lef a\rig\und)=\{a\}$.

\begin{definition}
\label{def:clubasic}
A mem-basic form $P$ is a \textbf{CLU-basic form} if 
for the unique $\sigma\in(\sha(P))^\si$ of length $|\sha(P)|$ 
that agrees with $(A,<)$, $P=F_{\sigma}(P_{0},..., P_{2^{|\sigma|}-1})$
and the following conditions are satisfied:
\textup{\begin{enumerate}[(i)]
  \setlength\itemsep{0mm}
\item \emph{for some $j\in \{0,...,2^{|\sigma|}-1\}$, $P_j\ne\und$, and}
\item \emph{for all $i\in \{0,...,2^{|\sigma|}-1\}$,
$P_i$ is a CLU-basic form.}
\end{enumerate}
}
\end{definition}

We will often write \textbf{CUbf} instead of ``CLU-basic form''.
For example, $\und\lef a\rig \tr$ is a CUbf. 
Note that the above definition implies that $F_\rho(\vec\und)$ is not a CUbf for any 
$\rho\in A^\si\setminus\{\epsilon\}$. More generally, a mem-basic form in \mBFu\ of depth larger than 0 with 
only \und-leaves is not a CUbf.

\begin{lemma}
\label{la:markingsu}
If $P$ is a mem-basic form   
and $\sigma\in (\sha(P))^\si$ has maximal length and agrees with $(A,<)$, 
then there exist mem-basic forms $P_0,...,P_{2^{|\sigma|}-1}$ such that
\[\CPcondu\vdash P=F_\sigma(P_0,...,P_{2^{|\sigma|}-1}).\]
\end{lemma}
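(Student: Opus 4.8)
The plan is to adapt the proof of Lemma~\ref{la:markings}, making two changes forced by the presence of \und\ together with one genuinely new auxiliary fact. First, I would strengthen the statement so that $\sigma$ need not have maximal length: for every mem-basic form $P$ and every $\sigma\in(\sha(P))^\si$ that agrees with $(A,<)$ there exist mem-basic forms $P_0,\dots,P_{2^{|\sigma|}-1}$ with $\CPcondu\vdash P=F_\sigma(P_0,\dots,P_{2^{|\sigma|}-1})$, Lemma~\ref{la:markingsu} being the maximal-length instance. Second, the induction must run on the number $|\alpha(P)|$ of atoms of $P$ rather than on $d(P)$: unlike in the two-valued case, axiom~\eqref{cpu} lets the reduct $\Le_a(P)$ of a mem-basic form retain the same depth even when $a\in\sha(P)$ --- e.g., with $a<b$, $\Le_a((\tr\lef a\rig\fa)\lef b\rig(\und\lef c\rig\und))=\tr\lef b\rig(\und\lef c\rig\und)$ still has depth $2$ --- so depth is no longer a sound parameter, whereas $|\alpha(\Le_a(P))|<|\alpha(P)|$ and $|\alpha(\Ri_a(P))|<|\alpha(P)|$ whenever $a\in\sha(P)$, because $a$ then occurs in neither reduct.

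For the induction step, write $\sigma=a\rho$ with $a$ the $<$-least atom of $\sigma$ (if $\sigma=\epsilon$ take $P_0=P$). Then $a\in\sha(P)$, so by Lemma~\ref{la:shaa} $\CPcondu\vdash P=\Le_a(P)\lef a\rig\Ri_a(P)$, with both reducts having strictly fewer atoms than $P$. I would rewrite each reduct as $F_\rho(\cdot)$ by a case distinction. If all leaves of $\Le_a(P)$ are \und, then Lemma~\ref{la:U}(ii) gives $\CPcondu\vdash\Le_a(P)=\und=F_\rho(\vec\und)$. Otherwise the atoms of $\rho$ all lie in $\sha(P)\setminus\{a\}$, hence, by the key fact below, in $\sha(\Le_a(P))$, so $\rho\in(\sha(\Le_a(P)))^\si$ agrees with $(A,<)$ and the induction hypothesis applies to $\Le_a(P)$ and $\rho$. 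Either way $\CPcondu\vdash\Le_a(P)=F_\rho(\vec Q)$ for mem-basic forms $\vec Q$ of length $2^{|\rho|}$, and symmetrically $\CPcondu\vdash\Ri_a(P)=F_\rho(\vec R)$; by the definition of $F_{a\rho}$ this yields $\CPcondu\vdash P=F_\rho(\vec Q)\lef a\rig F_\rho(\vec R)=F_\sigma(\vec Q,\vec R)$, which is what was wanted.

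The main obstacle is the key fact just invoked: if $a\in\sha(P)$ and not all leaves of $\Le_a(P)$ are \und, then $\sha(P)\setminus\{a\}\subseteq\sha(\Le_a(P))$ (and symmetrically for $\Ri_a$). The remark after Lemma~\ref{la:shaa} shows $\sha$ is not preserved under provable equality, so this must be argued from the concrete shape of the reduct. I would derive it from a combinatorial reading of $\sha$: by structural induction on $P$, using the invariant $\alpha(P)\subseteq x$ to control the pool argument of $f$, one shows $f(P,x)=\{c\in x\mid\text{every evaluation path of }P\text{ tests }c\text{ or ends in a }\und\text{-leaf}\}$, so that $\sha(P)$ is precisely the set of atoms occurring on every non-\und-ending evaluation path of $P$. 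Now every evaluation path of $\Le_a(P)$ arises either from an evaluation path of $P$ that goes left at its unique $a$-test, by deleting that test, or from one that never tests $a$ and hence --- because $a\in\sha(P)$ --- ends in \und; a short path-chase then shows that any $b\in\sha(P)\setminus\{a\}$ not lying in $\sha(\Le_a(P))$ is tested nowhere in $\Le_a(P)$ and that consequently every evaluation path of $\Le_a(P)$ ends in \und, contradicting the assumption. Granting this, the remainder is a routine rerun of the argument for Lemma~\ref{la:markings}.
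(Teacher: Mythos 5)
Your proof is correct and follows the same basic plan as the paper's (peel off the first atom of $\sigma$ via Lemma~\ref{la:shaa} and recurse into $\Le_a(P)$ and $\Ri_a(P)$), but your modifications are not cosmetic: they repair genuine gaps in the paper's own argument, which transcribes the two-valued proof of Lemma~\ref{la:markings} essentially verbatim. The paper inducts on $d(P)$ and asserts that the reducts ``are both of lesser depth'' and that $\rho$ orders their shared alphabets; both assertions can fail once \und\ is present. Your example $P=(\tr\lef a\rig\fa)\lef b\rig(\und\lef c\rig\und)$ with $a<b$ (where $a\in\sha(P)=\{a,b\}$ yet $d(\Le_a(P))=d(P)=2$) refutes the first; for the second, take $P=\und\lef a\rig(\tr\lef b\rig\fa)$, where $\sha(P)=\{a,b\}$ but $\sha(\Le_a(P))=\sha(\und)=\emptyset$, so $\rho=b$ is not an enumeration of $\sha(\Le_a(P))$ and the induction hypothesis cannot be invoked on the left reduct as stated. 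Your three adjustments --- measuring by $|\alpha(P)|$, which does drop strictly for both reducts when $a\in\sha(P)$; generalising the statement to arbitrary $\sigma\in(\sha(P))^\si$ so that $\rho$ need not be a maximal enumeration of $\sha(\Le_a(P))$; and splitting on whether $\Le_a(P)$ has only \und-leaves, discharging that case with Lemma~\ref{la:U} and otherwise using the inclusion $\sha(P)\setminus\{a\}\subseteq\sha(\Le_a(P))$ --- address exactly these points. The path-based reading of $\sha$ you use for that inclusion (an atom lies in $\sha(P)$ iff it is tested on every evaluation path not ending in \und) is the right invariant and is routine to verify by structural induction; my only quibble is that your contrapositive phrasing of the path-chase is looser than the direct argument (every non-\und\ path of $\Le_a(P)$ arises from a non-\und\ path of $P$ that passes through an $a$-node and hence tests every $b\in\sha(P)$), but the fact itself is correct. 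In short, your version buys a sound induction where the paper's, read literally, does not.
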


\begin{proof}
By induction on the depth of $P$. The base case $d(P)=0$ is trivial. 
If $d(P)>0$ and
$\sigma=a\rho$, then $\rho$ represents the ordering of the shared alphabets of
the mem-basic forms $\Le_{a}(P)$ and $\Ri_{a}(P)$ that
are both of lesser depth, hence
\begin{align*}
\CPcondu\vdash P&=\Le_{a}(P)\lef a\rig\Ri_{a}(P)
&&\text{(by Lemma~\ref{la:shaa})}\\
&=F_{\rho}(P_0,...,P_{2^{|\rho|}-1})\lef a\rig F_{\rho}(P_0',...,P_{2^{|\rho|}-1}')
&&\text{(IH)}\\
&=F_{\sigma}(P_0,...,P_{2^{|\sigma|}-1})
\end{align*}
if $P_{2^{|\rho|}+i}=P_i'$ for $i=0,...,2^{|\rho|}-1$.
\end{proof}

\begin{theorem}
\label{thm:zaku}
For each mem-basic form $P$ there is a unique 
CLU-basic form $Q$ such that 
\[\CPcond\vdash P=Q.
\]
\end{theorem}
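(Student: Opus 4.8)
The statement is the three-valued counterpart of Theorem~\ref{thm:zak3} (and ``$\CPcond$'' should there read ``$\CPcondu$''), so the plan is to mimic that proof: obtain existence from an explicit recursion on $d(P)$ and uniqueness from an invariant of $\CPcondu$-provable equality. I read ``CUbf'' so that the constants $\tr,\fa,\und$ count as CUbfs — this is the reading consistent with the remark that $F_\rho(\vec\und)$ fails to be a CUbf only for $\rho\neq\epsilon$; note in particular that a CUbf of depth $>0$ always has some non-$\und$ leaf, by condition~(i) of Definition~\ref{def:clubasic} and induction.

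The crux, and the step I expect to be the real obstacle, is a substitute for Lemma~\ref{la:cta}, which is genuinely false here (witness $\und$ and $\und\lef a\rig\und$). Unwinding the definition of $\sha$ one checks that for \emph{every} mem-basic form $P$ having at least one non-$\und$ leaf, $\sha(P)$ is exactly the set of atoms that are tested on every evaluation path of $P$ ending in $\tr$ or $\fa$; equivalently — interpreting atoms over the three-element algebra $C$ — $\sha(P)=\{a\mid P$ evaluates to $\und$ under every valuation $\rho:A\to C$ with $\rho(a)=\und\}$. (The excluded forms, namely $\und$ itself and the depth-$>0$ forms with only $\und$-leaves, satisfy $\CPcondu\vdash P=\und$ by Lemma~\ref{la:U}.) Since $C$ validates all axioms of $\CPcondu$ — a routine check on the tables — this second description is invariant under provable equality, which yields the needed lemma: \emph{if $\CPcondu\vdash P=Q$ and $P$ is not provably $\und$, then $Q$ is not either and $\sha(P)=\sha(Q)$.}

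Existence then goes by induction on $d(P)$. For $d(P)=0$ use $P\in\{\tr,\fa,\und\}$. For $d(P)>0$ the root atom lies in $\sha(P)$, so $\sha(P)\neq\emptyset$; let $\sigma$ be its $<$-increasing enumeration, and apply Lemma~\ref{la:markingsu} to get $\CPcondu\vdash P=F_\sigma(P_0,\dots)$ with each $P_i$ of depth $<d(P)$ and free of $\sigma$'s atoms. By the induction hypothesis each $P_i$ has a CUbf $Q_i$, still free of $\sigma$'s atoms (an easy induction on the construction), so $F_\sigma(Q_0,\dots)$ is a mem-basic form and $\CPcondu\vdash P=F_\sigma(Q_0,\dots)$. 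If every $Q_i=\und$, then $\CPcondu\vdash P=\und$ by Lemma~\ref{la:U} and $\und$ is the required CUbf. Otherwise some $Q_j\neq\und$, so $F_\sigma(Q_0,\dots)$ has a non-$\und$ leaf; the repaired lemma then gives $\sha\bigl(F_\sigma(Q_0,\dots)\bigr)=\sha(P)=$ the atom set of $\sigma$, so $\sigma$ is also the $<$-enumeration of $\sha\bigl(F_\sigma(Q_0,\dots)\bigr)$, and since each $Q_i$ is a CUbf and some $Q_j\neq\und$, $F_\sigma(Q_0,\dots)$ is a CUbf.

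For uniqueness it suffices that two $\CPcondu$-provably equal CUbfs are syntactically equal, which I would prove by induction on the sum of their depths. If one of them is $\und$, the other has only $\und$-leaves (by soundness of $C$) and, being a CUbf, must itself be $\und$ — a depth-$>0$ CUbf has a non-$\und$ leaf, and $F_\rho(\vec\und)$ is not a CUbf for $\rho\neq\epsilon$. If one is $\tr$ or $\fa$, the other has the same constant semantics and hence (a depth-$>0$ CUbf tests some atom, so does not have constant semantics) is that same constant. Otherwise both are depth-$>0$ CUbfs with a non-$\und$ leaf, so by the repaired lemma they share a common shared alphabet and therefore a common prefix $\sigma$: $Q=F_\sigma(Q_0,\dots)$ and $Q'=F_\sigma(Q_0',\dots)$. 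Each $Q_i$ is recovered from $Q$ (and $Q_i'$ from $Q'$) by a sequence of $\Le_a$- and $\Ri_a$-reductions with $a$ ranging over $\sigma$; since $\Le_a$ and $\Ri_a$ respect $\CPcondu$-provable equality (a derivation may be specialised by renaming $a$ to a fresh variable and then substituting $\tr$ or $\fa$, or one may invoke Lemma~\ref{la:shaa}), we get $\CPcondu\vdash Q_i=Q_i'$, hence $Q_i=Q_i'$ by the induction hypothesis, and therefore $Q=Q'$. Everything other than the second paragraph — the collapse-to-$\und$ case, the $\Le_a/\Ri_a$ peeling, the two depth inductions — is routine; the substitute for Lemma~\ref{la:cta} is where the real content lies.
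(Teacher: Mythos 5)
Your proof is correct, and for existence it is essentially the paper's own argument: induction on $d(P)$, decomposition via Lemma~\ref{la:markingsu} over the $<$-increasing enumeration $\sigma$ of $\sha(P)$, recursion on the components, and the case split on whether all normalised components are $\und$ (collapsing to $\und$ via Lemma~\ref{la:U}). Where you genuinely depart from the paper is uniqueness. The paper disposes of it in a single clause (``therefore it is the unique CUbf \dots''), which silently requires exactly what you identify as the crux: that any CUbf provably equal to $P$ must decompose over the \emph{same} $\sigma$, i.e.\ an invariance of $\sha$ under $\CPcondu$-provable equality --- and the paper has just remarked that the analogue of Lemma~\ref{la:cta} fails (witness $\und$ versus $\und\lef a\rig\und$). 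Your repaired lemma --- for a mem-basic form with at least one non-$\und$ leaf, $\sha(P)$ is the set of atoms $a$ such that $P$ evaluates to $\und$ under every valuation sending $a$ to $\und$, hence is preserved by provable equality once the provably-$\und$ terms are set aside --- is sound (the three-element truth tables do validate \eqref{cp1}--\eqref{cp4}, \eqref{CPmem}, \eqref{eq:clcond} and \eqref{cpu}), but it is machinery the paper nowhere makes explicit: you import a model-theoretic soundness argument where the paper stays purely syntactic. It also quietly covers a point the paper's existence step skips, namely that $\sha(F_\sigma(Q_0,\dots,Q_{2^n-1}))$ really equals the atom set of $\sigma$ after the components have been normalised, so that the assembled term satisfies the first clause of Definition~\ref{def:clubasic}. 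The remaining pieces of your uniqueness argument --- the $\Le_a$/$\Ri_a$ peeling justified by Lemma~\ref{la:shaa} together with substitution of $\tr$ or $\fa$ for a freshly abstracted $a$, and the induction on the sum of depths --- are routine and correct, and your reading of Definition~\ref{def:clubasic} (constants count as CUbfs) and the correction of $\CPcond$ to $\CPcondu$ in the statement are both right. What the paper's terser route buys is brevity at the cost of a genuine gap; what yours buys is a complete uniqueness proof at the cost of introducing a semantic soundness check not otherwise present in this part of the paper.
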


\begin{proof}
By induction on the depth of $P$. 
If $d(P)=0$ then $P\in\{\tr,\fa,\und\}$ and we are done. 

If $d(P)>0$ then $|\sha(P)|=n$ for some $n\geq1$. 
For the unique $\sigma\in(\sha(P))^\si$ of length $n$ that agrees with $(A,<)$
and Lemma~\ref{la:markingsu},
$\CPcondu\vdash P=F_\sigma(P_0,...,P_{2^n-1})$. 
By induction, there are unique CUbfs $Q_i$ 
such that for all $i$, $\CPcondu\vdash P_i=Q_i$.
The following two cases have to be distinguished:
\begin{enumerate}
  \setlength\itemsep{0mm}
\item For all $i$, $Q_i=\und$. Then $Q=\und$ is a CUbf,
and by Lemma~\ref{la:U}, it is the unique CUbf 
that satisfies $\CPcondu\vdash P=Q$.
\item For some $j\in \{0,...,2^{|\sigma|}-1\}$, $Q_j\ne\und$. 
Then $Q=F_\sigma(Q_0,...,Q_{2^{|\sigma|}-1})$ is a CUbf
and therefore it is the unique CUbf over $(\alpha(P),<)$
that satisfies $\CPcondu\vdash P=Q$.
\end{enumerate}
\end{proof}

\begin{example}
\label{ex:treeU}
Consider the mem-basic form $P$ that we display graphically below on the left, and 
observe that $\sha(P)=\{a,b,c\}$. 
If $\textit{abc}\in A^\si$ agrees with $(A,<)$, the CLU-basic form of $P$ is shown on the right. 
\[
\begin{array}[t]{ll}
\begin{array}[t]{l}
\begin{tikzpicture}[%
level distance=7.5mm,
level 1/.style={sibling distance=32mm},
level 2/.style={sibling distance=16mm},
level 3/.style={sibling distance=8mm},
level 4/.style={sibling distance=4mm}
]
\node (root) {$a$}
  child {node (lef1) {$c$}
    child {node (lef2) {$d$}
      child {node (lef3r) {$e$}
        child {node (lef4l) {$\und$}} 
        child {node (lef4r) {$\und$}}
        }  
      child {node (lef3r) {$f$}
        child {node (lef4l) {$\und$}} 
        child {node (lef4r) {$\und$}}
        } 
    }
    child {node (rig2) {$\und$}}
  }
  child {node (rig1) {$b$}
    child {node (lef2) {$\und$}}
    child {node (rig2) {$c$}
       child {node (XX) {$\tr$}}
       child {node (XX) {$\fa$}}
    }
  };
\end{tikzpicture}
\end{array}
&\qquad
\begin{array}[t]{l}
\\[-35.6mm]
\begin{tikzpicture}[%
level distance=7.5mm,
level 1/.style={sibling distance=32mm},
level 2/.style={sibling distance=16mm},
level 3/.style={sibling distance=8mm},
level 4/.style={sibling distance=4mm}
]
\node (root) {$a$}
  child {node (lef1) {$b$}
    child {node (lef2) {$c$}
      child {node (lef3r) {$\und$}}  
      child {node (lef3r) {$\und$}} 
    }
    child {node (rig2) {$c$}
      child {node (lef3r) {$\und$}}  
      child {node (lef3r) {$\und$}} 
    }
  }
  child {node (rig1) {$b$}
    child {node (lef2) {$c$}
      child {node (lef3r) {$\und$}}  
      child {node (lef3r) {$\und$}} 
    }
    child {node (rig2) {$c$}
       child {node (XX) {$\tr$}}
       child {node (XX) {$\fa$}}
    }
  };
\end{tikzpicture}
\end{array}
\end{array}
\]
Note that the example remains valid
if in both $P$ and its CUbf one of the leaves \tr\ or \fa\ is replaced by \und.
If another order of $\sha(P)$ agrees with $(A,<)$, the corresponding CUbf
is easily found.
\hfill\qedex
\end{example}

The mem-basic form function $\membf: \PS\to\PS$, defined in~\cite{BP17} and used in the previous section, 
can be easily extended to $\PSu\to\PSu$; we show this in the Appendix. 
This implies that for all $P\in\PSu$,
$\CPmemu\vdash P = \membf(P)$, by which we find the following result: 

\begin{corollary}
\label{cor:cl3}
For all $P,Q\in\PSu$, $\CPcondu\vdash P=Q$
if, and only if, 
$\llbracket P\rrbracket_\cond=\llbracket Q\rrbracket_\cond$, 
where $\llbracket P\rrbracket_\cond$ is $P$'s unique CLU-basic form.
\end{corollary}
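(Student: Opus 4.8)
The plan is to mirror the proof of Corollary~\ref{cor:cl2}, replacing Theorem~\ref{thm:zak3} by its three-valued analogue Theorem~\ref{thm:zaku} and $\CPmem$, $\CPcond$, $\PS$, $\mBF$ by $\CPmemu$, $\CPcondu$, $\PSu$, $\mBFu$. First I would invoke the extension of the normalisation function $\membf$ to $\PSu\to\PSu$ that is provided in the Appendix; it gives $\CPmemu\vdash P=\membf(P)$ for every $P\in\PSu$, and since $\membf(P)$ is a mem-basic form in $\mBFu$, Theorem~\ref{thm:zaku} yields a unique CLU-basic form, which I define to be $\llbracket P\rrbracket_\cond$, with $\CPcondu\vdash\membf(P)=\llbracket P\rrbracket_\cond$. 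As $\CPmemu\subseteq\CPcondu$, this establishes the chain $\CPcondu\vdash P=\membf(P)=\llbracket P\rrbracket_\cond$ for all $P\in\PSu$, which is the only ingredient both directions of the equivalence need. From this chain the ``if'' direction is immediate: if $\llbracket P\rrbracket_\cond=\llbracket Q\rrbracket_\cond$, then $\CPcondu\vdash P=\llbracket P\rrbracket_\cond=\llbracket Q\rrbracket_\cond=Q$.

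For the ``only if'' direction, assuming $\CPcondu\vdash P=Q$, the chain above gives $\CPcondu\vdash\llbracket P\rrbracket_\cond=P=Q=\llbracket Q\rrbracket_\cond$, so $\llbracket P\rrbracket_\cond$ and $\llbracket Q\rrbracket_\cond$ are two CLU-basic forms that are $\CPcondu$-provably equal to $\membf(P)$, and by the uniqueness clause of Theorem~\ref{thm:zaku} they must coincide. It is worth stressing that one cannot take the two-valued shortcut of first showing that the shared alphabet is an invariant of provable equality: the analogue of Lemma~\ref{la:cta} genuinely fails in the three-valued setting, since $\sha(\und)=\emptyset\ne\{a\}=\sha(\und\lef a\rig\und)$ while $\CPcondu\vdash\und=\und\lef a\rig\und$. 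All the bookkeeping around $\und$-leaves --- in particular the case split between ``every component equals $\und$'' and ``some component differs from $\und$'' --- is already absorbed into Theorem~\ref{thm:zaku}, so nothing new has to be done here.

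The hard part will not lie in the corollary itself but upstream, in verifying in the Appendix that $\membf$ really does lift to $\PSu$ while still returning a $\CPmemu$-provable normal form; once that is in hand, the corollary follows from Theorem~\ref{thm:zaku} by the routine manipulation just sketched.
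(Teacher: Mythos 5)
Your proposal is correct and follows exactly the route the paper intends: extend $\membf$ to $\PSu$ (Appendix) to get $\CPmemu\vdash P=\membf(P)$, then compose with the existence-and-uniqueness statement of Theorem~\ref{thm:zaku} and use $\CPmemu\subseteq\CPcondu$; the paper leaves these routine steps implicit, and your observation that the failure of the analogue of Lemma~\ref{la:cta} forces this route matches the paper's own remark.
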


Corollary~\ref{cor:cl3} and the above theorem imply that 
\CPcondu\ axiomatises three-valued
conditional valuation congruence, that is, the congruence on \PSu\ defined by having equal CLU-basic forms.

It remains to be shown that \CLSCLu\ is completely axiomatised by the equational logic \CLthree. 

\begin{theorem}
\label{thm:CLcompleteu}
For all terms $s,t$ over $\SigSCL$, $\CLthree\vdash s=t~\iff~\CLSCLu\vdash s=t$.
\end{theorem}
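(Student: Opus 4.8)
The plan is to mimic the proof of Theorem~\ref{thm:CLcomplete} step for step, replacing every two-valued object by its three-valued counterpart. Concretely, I would again invoke \cite[Thm.6.5]{BPS21}, which reduces the completeness of an equational axiomatisation relative to a short-circuit logic defined via an exported proposition-algebra axiomatisation to two crucial intermediate claims. Lifted to the present three-valued case with the axiomatisation \CLthree\ and the proposition algebra \CPcondu, these become: \textbf{(1)} for all $s,t\in\TCP$, $\CPcondu\vdash s=t$ implies $\CLthree\vdash g(s)=g(t)$; and \textbf{(2)} for all $t\in\TSCL$, $\CLthree\vdash g(f(t))=t$.

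For claim~(2), I would observe that it holds already for \MSCLe\ (it is the $g\circ f$-round-trip identity used in \cite{BPS21}), hence a fortiori for \CLthree, provided \CLthree\ proves the \MSCLe-axioms together with $\neg\und=\und$; this is exactly what a three-valued analogue of Theorem~\ref{thm:indep2} should give (i.e.\ $\CLthree\vdash\MSCLeu$). For claim~(1) it suffices to show that each axiom of \CLthree\ is derivable in \CLSCLu, i.e.\ that applying $g()$ to each axiom of \CLthree\ yields an identity provable from $\CPcondu\cup\{\eqref{defneg},\eqref{defand}\}$ after hiding. Again assuming $\CLthree\vdash\MSCLeu$, the \MSCLeu-part is handled by the known \MSCLu-reasoning; the only genuinely new axioms of \CLthree\ over \CLtwo\ are $\neg\und=\und$ and the conditional-logic axiom~\eqref{g}, namely $(x\leftand y)\leftor(y\leftand x)=(y\leftand x)\leftor(x\leftand y)$. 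The former translates to axiom~\eqref{cpu} of \CPcondu; the latter translates (via $g()$, up to the provable reshaping in the proof of Lemma~\ref{la:convenient}(ii)) to axiom~\eqref{eq:c1}, which by Lemma~\ref{la:convenient} together with \CPmem\ is interderivable with~\eqref{eq:clcond}, and \eqref{eq:clcond} is an axiom of \CPcondu. So the derivation of the \CLthree-axioms from \CLSCLu\ goes through.

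The soundness direction ($\CLthree\vdash s=t\Rightarrow\CLSCLu\vdash s=t$) is immediate: \CLSCLu\ is defined as the short-circuit logic that implies the consequences of the module expression $\{\tr,\neg,\leftand\}\export(\CPcondu\cup\{\eqref{defneg},\eqref{defand}\})$, and we have just seen that all axioms of \CLthree\ are among those consequences, so every \CLthree-theorem is a \CLSCLu-theorem. The completeness direction ($\CLSCLu\vdash s=t\Rightarrow\CLthree\vdash s=t$) is the content of claims (1) and (2) plugged into the schema of \cite[Thm.6.5]{BPS21}: given $\CLSCLu\vdash s=t$ for $\SigSCL$-terms $s,t$, one has $\CPcondu\vdash f(s)=f(t)$ (this is where Corollary~\ref{cor:cl3} / Theorem~\ref{thm:zaku} is used to justify that $f$ faithfully transports provability, noting $f$ lands in $\TCP$ since $\SigSCL$ has no \und), hence by~(1) $\CLthree\vdash g(f(s))=g(f(t))$, and by~(2) $\CLthree\vdash s=g(f(s))$ and $\CLthree\vdash t=g(f(t))$, so $\CLthree\vdash s=t$.

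The main obstacle I anticipate is not any single hard calculation but rather making sure the three-valued bookkeeping is airtight: in particular that the \MSCLeu-layer of \CLthree\ really does license all the \MSCLu-level manipulations (so that a three-valued version of Theorem~\ref{thm:indep2}, $\CLthree\vdash\MSCLeu$, is available — the statement as excerpted only gives \CLtwo$\vdash$\MSCLe), and that $g(\eqref{cpu})$ — i.e.\ the $\SigSCL$-rendering of $x\lef\und\rig y=\und$ as $(\und\leftand x)\leftor(\neg\und\leftand y)=\und$ — is indeed an identity derivable in \CLthree\ from $\neg\und=\und$ and the \MSCLeu-axioms, rather than something needing a separate \CPcondu-argument. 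Given those, the proof is a routine transcription of the Section~\ref{sec:3} argument, so I would state it briefly: ``Analogous to the proof of Theorem~\ref{thm:CLcomplete}, now using \CPcondu\ instead of \CPcond, Corollary~\ref{cor:cl3} instead of Corollary~\ref{cor:cl2}, Lemma~\ref{la:convenient} (whose statement and proof are insensitive to the presence of \und), and the three-valued analogue of Theorem~\ref{thm:indep2}.''
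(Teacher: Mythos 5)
Your proposal is correct and follows essentially the same route as the paper, which likewise proves this by transcribing the proof of Theorem~\ref{thm:CLcomplete} to the three-valued setting (relying on the three-valued completeness result \cite[Thm.7.16]{BPS21} in place of \cite[Thm.6.5]{BPS21}) and extending the translations by $f(\und)=\und$ and $g(\und)=\und$. The concerns you flag are resolved exactly as you anticipate: Theorem~\ref{thm:indep2}(iv) gives $\CLthree\vdash\MSCLeu$, and Corollary~\ref{cor:cl3} plays the role of Corollary~\ref{cor:cl2}.
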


\begin{proof} As the proof of Theorem~\ref{thm:CLcomplete}, but now  relying on Theorem~\cite[Thm.7.16]{BPS21}.
The extensions with \und\ in the corresponding intermediate results require $f(\und)=\und$ and $g(\und)=\und$.
\end{proof}

Like in the previous section, we can use CLU-basic forms to define ``normal forms'' in the 
equational theory of \CLthree:
for $P\in\SPu$ we find $f(P)\in\PSu$,
and thus, given an ordering $(A,<)$, a unique CLU-basic form $\membf(f(P))$ such that 
$\CPcondu\vdash f(P)=\membf(f(P))$ and thus
\[\CLthree\vdash P=g(f(P))=g(\membf(f(P))),\]
so $g(\membf(f(P)))$ can be defined as the normal form of $P$.

\section{Independent axiomatisations of CL }
\label{sec:indep}

Guzmán and Squier state in~\cite{GS90} that the seven equational axioms in Table~\ref{tab:CL}, 
i.e.
\\[-4mm]
\begin{minipage}[t]{0.51\linewidth}\centering
\begin{Lalign}
\tag{\ref{a}}
&x'' = x
\\
\tag{\ref{b}}
&(x \wedge y)' = x' \vee y'
\\
\tag{\ref{c}}
&(x \wedge y) \wedge z = x\wedge (y \wedge z)
\\
\tag{\ref{d}}
&x\wedge (y \vee z) = (x \wedge y) \vee (x \wedge z)
\end{Lalign}
\end{minipage}
\begin{minipage}[t]{0.57\linewidth}\centering
\begin{Lalign}
\tag{\ref{e}}
&(x \vee y) \wedge z = (x \wedge z) \vee (x' \wedge y \wedge z)
\\
\tag{\ref{f}}
&x \vee (x \wedge y) = x
\\
\tag{\ref{g}}
&(x \wedge y) \vee (y \wedge x) = (y \wedge x) \vee (x \wedge y)
\end{Lalign}
\end{minipage}
\\[4mm]
are to the best of their knowledge independent. In this section we show that this is not the case
and provide several axiomatisations that are independent. Also, we prove that $\MSCL\prec\CLSCLtwo$.

\medskip

As an alternative to axiom~\eqref{e} that does not assume the associative law~\eqref{c}, we consider both possible
readings of~\eqref{e}:
\begin{Lalign}
\label{e1}
\tag{Mem1}
(x \vee y)\wedge z = (x \wedge z) \vee (x' \wedge (y \wedge z)),
\hspace{52mm}
\\
\label{e2}
\tag{Mem2}
(x \vee y) \wedge z = (x \wedge z) \vee ((x' \wedge y) \wedge z).
\hspace{52mm}
\end{Lalign}
\indent
Furthermore, we shall write \(\CLu\) for \CL\ with (only) the constant \und\ distinguished.
For the axiomatisations of \CLu, \CLtwo\ and \CLthree\ (thus, \und\ or/and  \tr\ and \fa\ are distinguished), 
the following axioms are provided in~\cite{GS90}:
\\[-6mm]
\begin{minipage}[t]{0.32\linewidth}\centering\begin{Lalign}
\label{4a}
\tag{1.4a}
&\und'=\und
\\
\label{4b}
\tag{1.4b}
&\tr\wedge x=x
\\
\label{4c}
\tag{1.4c}
&\tr'=\fa
\end{Lalign}
\end{minipage}

In the proofs below, we used the theorem prover \emph{Prover9} and the tool
\emph{Mace4} for generating finite models, for both tools see~\cite{Prover9}.
Derivations from equational axiomatisations were found with \emph{Prover9}, 
and \emph{Mace4} was used to prove all independence results.
We used these tools on a Macbook Pro with a 2.4GHz 
dual-core Intel Core i5 processor and 4GB of RAM.

\begin{theorem}
\label{thm:indep}
\textup{(i)} Conditional logic with no distinguished constants \textup{(\CL)} is completely axiomatised by
the five axioms \eqref{a}, \eqref{b}, \eqref{e1}, \eqref{f}, \eqref{g}, and this group of axioms is
independent.

\noindent
\textup{(ii)} Conditional logic with no distinguished constants \textup{(\CL)} is completely axiomatised by
the six axioms \eqref{a}, \eqref{b}, \eqref{d}, \eqref{e2}, \eqref{f}, \eqref{g},
and this group of axioms is independent.

\noindent
\textup{(iii)} Conditional logic with (only) \tr\ and \fa\ distinguished \textup{(\CLtwo)} is completely axiomatised by  
axioms \eqref{4b}, \eqref{4c}, and those in
\textup{(i)} or \textup{(ii)}. 
Each of these groups of axioms is independent.

\noindent
\textup{(iv)} Conditional logic with (only) \und\ distinguished \textup{(\CLu)} is completely axiomatised by  
axiom \eqref{4a} and those in
\textup{(i)} or \textup{(ii)}.
Each of these groups of axioms is independent.

\noindent
\textup{(v)} Conditional logic with  \tr, \fa\ and \und\ distinguished \textup{(\CLthree)}  is completely axiomatised by  
axioms \eqref{4a}, \eqref{4b}, \eqref{4c}, and those in
\textup{(i)} or \textup{(ii)}.
Each of these groups of axioms is independent.
\end{theorem}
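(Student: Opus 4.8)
The plan is to reduce every item to two ingredients: the completeness results of \cite{GS90} for the original axiom sets, and a handful of equational derivations (to be produced with \emph{Prover9}) together with finite separating algebras (to be produced with \emph{Mace4}).

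For the completeness halves I would argue as follows. By \cite{GS90}, the seven axioms \eqref{a}--\eqref{g} are complete for the variety of $C$-algebras, and (again by \cite{GS90}) adjoining \eqref{4a}, respectively \eqref{4b} and \eqref{4c}, respectively all three of \eqref{4a}, \eqref{4b}, \eqref{4c}, yields complete axiomatisations of \CLu, \CLtwo\ and \CLthree. Hence it suffices to show that each proposed axiom set is \emph{interderivable} with the corresponding original set over the same signature; and since the constant axioms occur literally in both presentations, items (iii)--(v) follow immediately from (i) and (ii) by adjoining the same constant axioms to interderivable \CL-parts. So the real content lies in two interderivabilities in the constant-free signature. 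For (i) I would use \emph{Prover9} to derive \eqref{c} and \eqref{d} from \eqref{a}, \eqref{b}, \eqref{e1}, \eqref{f}, \eqref{g}; then \eqref{e} follows since, modulo associativity, \eqref{e} and \eqref{e1} are the same equation, while conversely \eqref{e1} is derivable from \eqref{a}--\eqref{g} using \eqref{c} and \eqref{e}. For (ii) I would derive \eqref{c} from \eqref{a}, \eqref{b}, \eqref{d}, \eqref{e2}, \eqref{f}, \eqref{g} (whence \eqref{e} as before), and \eqref{e2} from \eqref{a}--\eqref{g}. The reason \eqref{d} can be dropped in (i) but must be kept in (ii) is precisely that \eqref{e1}, but not \eqref{e2}, forces distributivity in the presence of \eqref{a}, \eqref{b}, \eqref{f}, \eqref{g}.

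For the independence claims I would, for each of the finitely many axiom groups named in (i)--(v) and each axiom $\varphi$ in that group, exhibit a finite algebra over the relevant signature that validates the group without $\varphi$ but refutes $\varphi$; all such algebras are found by \emph{Mace4} and would be displayed as a table of operation tables. For the constant-enriched groups one must check that the witnessing algebra admits distinguished elements $\tr,\fa$ (and $\und$) for which the retained constant axioms still hold; where a \CL-part counterexample does not admit such elements a separate small model is used, but in most cases the same underlying algebra is reused. Independence of \eqref{4a}, \eqref{4b}, \eqref{4c} themselves is witnessed by tiny algebras in which, e.g., $\und'\ne\und$ but everything else holds.

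I expect the main obstacle to be the derivation of associativity \eqref{c} in the weakened systems of (i) and (ii): in \cite{GS90} the harder identities are obtained only via the normal-form analysis, and it is not a priori clear that \eqref{c} survives once it (and, in case (i), \eqref{d}) has been removed from the axioms. I expect the corresponding \emph{Prover9} derivation to be the longest and least transparent part of the proof, presumably routed through a few auxiliary identities; a secondary point is ensuring that adjoining the constant axioms in (iii)--(v) does not silently introduce a redundancy in any listed group, which is again settled by \emph{Mace4}. Since both issues are resolved mechanically, the write-up would state the key interderivabilities explicitly and for the remainder rely on the tool output.
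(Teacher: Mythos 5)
Your proposal is correct and follows essentially the same route as the paper: completeness is reduced to the completeness results of \cite{GS90} via \emph{Prover9}-verified interderivability (deriving \eqref{c} and \eqref{d} from the group in (i), and \eqref{c} from the group in (ii), after which \eqref{e} and \eqref{e1}/\eqref{e2} coincide modulo associativity), and the constant-enriched cases follow by adjoining the constant axioms. The only difference is in the independence half: the paper runs \emph{Mace4} only on the two largest groups, namely the axiomatisations of \CLthree\ in (v), and observes that independence of a set of axioms is inherited by every subset (over a reduct signature), which disposes of (i)--(iv) in one stroke and also of your worry that adjoining the constant axioms might silently introduce a redundancy; your group-by-group, axiom-by-axiom search is correct but does strictly more work.
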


\begin{proof}
It suffices to show that the axioms in (i) and (ii) follow from \eqref{a}{--}\eqref{g} 
with the appropriate reading of~\eqref{g},
and that both axiomatisations of \CLthree\ are independent.

(i)  
If axiom~\eqref{e} is replaced by~\eqref{e1}, then~\eqref{c} and~\eqref{d} are derivable from the 
mentioned five axioms.
This follows with \emph{Prover9}, the options \texttt{lpo}+\texttt{pass}  
give the best results, 
although the first proof is long and time-consuming 
and both proofs takes less time when \eqref{g} is left out:
first~\eqref{c} and with that, \eqref{d}: 
$< 57''$ and $< 6''$, respectively.

(ii)  
If axiom~\eqref{e} is replaced by~\eqref{e2} then~\eqref{c} is derivable from the 
mentioned six axioms.
This follows with \emph{Prover9}, best result with options \texttt{lpo}+\texttt{unfold} and without~\eqref{g}: $< 36''$. 

Completeness of the axiomatisations in (i){--}(v) follows from the completeness results
proved in~\cite[Cy.2.7]{GS90}.

The two axiomatisations in (v) of \CLthree\ are independent, this quickly follows with \emph{Mace4}.
This implies that all axiomatisations in (i){--}(iv) are also independent.
\end{proof}

The \emph{Prover9} results mentioned above can be obtained slightly faster if \eqref{f} is 
replaced by its dual $x\wedge(x\vee y)=x$ and/or if~\eqref{e1} is replaced by its symmetric counterpart 
$(x \vee y) \wedge z = 
(x' \wedge (y \wedge z))\vee (x \wedge z)$. We note that each of these replacements preserves independence.
Below, we discuss an alternative axiomatisation based on these observations.

We return to our preferred notation $\leftand,\leftor$ and $\neg$, and we further assume 
that where necessary the axioms \eqref{a}{--}\eqref{g} and \eqref{4a}{--}\eqref{4c}
are adapted to this notation.

\begin{table}
{
\centering
\rule{1\textwidth}{.4pt}
\begin{align*}
\tag{\ref{Neg}}
\CLe:&
&\fa&=\neg\tr
\\
\tag{\ref{Or}}
&&x\leftor y
&=\neg(\neg x\leftand\neg y)
\\
\tag{\ref{Tand}}
&&\tr\leftand x&=x
\\
\tag{\ref{Abs}}
&&x\leftand(x\leftor y)
&=x
\\
\tag{\ref{Mem}}
&&(x\leftor y)\leftand z
&=(\neg x\leftand(y\leftand z))\leftor(x\leftand z)
\\
\label{Com}\tag{Com}
&&(x\leftand y)\leftor(y\leftand x)
&=(y\leftand x)\leftor(x\leftand y)
\\[0mm]
\cline{1-4}
\\[-4mm]
\label{Und}
\tag{Und}
\CLeu:&
&\neg\und&=\und
\end{align*}
\hrule
}
\caption{\CLe\ and \CLeu, axiomatisations that are equivalent with \CLtwo\ and \CLthree, respectively}
\label{tab:CL2}
\end{table}

In Table~\ref{tab:CL2} we extend \MSCLe\ to \CLe, and \MSCLeu\ to \CLeu\ with the axiom~\eqref{Com}.
Note the differences between axioms~\eqref{b} and~\eqref{Or}, and 
axioms~\eqref{e1}, \eqref{e2} and~\eqref{Mem}.

\begin{theorem}
\label{thm:indep2}
\textup{(i)} Conditional logic with no constants distinguished \textup{(\CL)} is completely axiomatised by
the five axioms $\neg\neg x=x$ and \eqref{Or}, \eqref{Abs}, \eqref{Mem}, and \eqref{Com} 
from Table~\ref{tab:CL2}.
Moreover, these axioms are independent.

\noindent
\textup{(ii)} Conditional logic with \tr\ and \fa\ distinguished \textup{(\CLtwo)} is completely axiomatised by
the six axioms of \CLe\ (see Table~\ref{tab:CL2}).
Moreover, these axioms are independent.

\noindent
\textup{(iii)} Conditional logic with only \und\ distinguished \textup{(\CLu)} is completely axiomatised by
the six axioms $\neg\neg x=x$, $\neg\und=\und$, and \eqref{Or}, \eqref{Abs}, \eqref{Mem} and \eqref{Com} in 
Table~\ref{tab:CL2}.
Moreover, these axioms are independent.

\noindent
\textup{(iv)} Conditional logic with all constants distinguished \textup{(\CLthree)} is completely axiomatised by
the seven axioms of \CLeu\ (see Table~\ref{tab:CL2}).
Moreover, these axioms are independent.
\end{theorem}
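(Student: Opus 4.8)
The plan is to recognise that the four axiomatisations in Table~\ref{tab:CL2}, once the \CL-connectives are written in the $\leftand,\leftor,\neg$ notation in force from here on and the dual/symmetric readings mentioned in the remark following the proof of Theorem~\ref{thm:indep} are adopted, are exactly the axiomatisations proved complete and independent in Theorem~\ref{thm:indep}; so completeness is inherited directly and independence needs only a little extra work. Concretely, under the renaming: \eqref{a} is $\neg\neg x=x$, \eqref{g} is~\eqref{Com}, \eqref{Abs} is the dual $x\leftand(x\leftor y)=x$ of~\eqref{f}, \eqref{Mem} is the symmetric counterpart $(x\leftor y)\leftand z=(\neg x\leftand(y\leftand z))\leftor(x\leftand z)$ of~\eqref{e1}, and \eqref{4b}, \eqref{4c}, \eqref{4a} become $\tr\leftand x=x$, $\fa=\neg\tr$, $\neg\und=\und$. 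The one clause not matched verbatim is~\eqref{b} versus~\eqref{Or}.

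I would first dispose of that mismatch. Over any algebra satisfying $\neg\neg x=x$, the identities~\eqref{b} and~\eqref{Or} are interderivable by a one-line computation: from~\eqref{Or}, $\neg x\leftor\neg y=\neg(\neg\neg x\leftand\neg\neg y)=\neg(x\leftand y)$; from~\eqref{b}, $\neg(\neg x\leftand\neg y)=\neg\neg x\leftor\neg\neg y=x\leftor y$. In parts~(i) and~(iii), where $\neg\neg x=x$ occurs as an explicit axiom, the swap of~\eqref{b} for~\eqref{Or} is therefore harmless. In parts~(ii) and~(iv) one need not list $\neg\neg x=x$ at all: the axioms \eqref{Neg}, \eqref{Or}, \eqref{Tand}, \eqref{Abs}, \eqref{Mem} are those of \MSCLe\ (Table~\ref{tab:MSCLe}), among whose consequences listed above is $\neg\neg x=x$; hence \CLe\ and \CLeu\ prove $\neg\neg x=x$, and with it~\eqref{b}, while conversely~\eqref{Or} follows from $\neg\neg x=x$ and~\eqref{b}. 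So each group in Table~\ref{tab:CL2} is deductively equivalent to the correspondingly numbered group of Theorem~\ref{thm:indep}, and the completeness statements follow.

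For independence I would use \emph{Mace4}, as in the proof of Theorem~\ref{thm:indep}, and I would note that since the groups of parts~(i) and~(ii) are subsets of those of parts~(iii) and~(iv) respectively, it suffices to treat the latter two. Most of the required separating models come for free: every model produced in Theorem~\ref{thm:indep} to separate an axiom other than~\eqref{a} and~\eqref{b} satisfies both~\eqref{a} and~\eqref{b}, hence~\eqref{Or}, and so serves unchanged for the corresponding axiom here; and a model satisfying~\eqref{a} but refuting~\eqref{b} also refutes~\eqref{Or}, which settles the independence of~\eqref{Or}.

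The step I expect to be the real obstacle is the independence of $\neg\neg x=x$ in parts~(i) and~(iii): this asks for a model of $\{\eqref{Or},\eqref{Abs},\eqref{Mem},\eqref{Com}\}$, together with $\neg\und=\und$ for part~(iii), that refutes $\neg\neg x=x$, and such a model cannot be imported from Theorem~\ref{thm:indep}, because~\eqref{b} and~\eqref{Or} diverge precisely when $\neg\neg x=x$ fails. I would obtain it by a fresh \emph{Mace4} run, in the same spirit as the other independence checks; with that model in hand, the independence of all four axiom groups in Table~\ref{tab:CL2} is established.
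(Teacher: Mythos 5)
Your proposal follows the paper's own proof in essence: completeness is obtained by showing that each axiom group in Table~\ref{tab:CL2} and the corresponding axiomatisation built on Theorem~\ref{thm:indep}.(i) are mutually derivable (the paper delegates this to hand computation or \emph{Prover9}; you spell out the \eqref{b}/\eqref{Or}, \eqref{f}/\eqref{Abs} and \eqref{e1}/\eqref{Mem} correspondences explicitly), and independence is established by \emph{Mace4} for parts (iii) and (iv) and then inherited by (i) and (ii) as subsets, exactly as in the paper. The only caveat is your claim that separating models from Theorem~\ref{thm:indep} ``serve unchanged'': a model of \eqref{a}, \eqref{b}, \eqref{e1}, \eqref{f} refuting \eqref{g} need not satisfy the dual of \eqref{f} or the symmetric counterpart of \eqref{e1}, so those models cannot simply be imported; but since you (like the paper) ultimately rest independence on fresh \emph{Mace4} runs, this does not affect the argument.
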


\begin{proof}
It follows easily (either by hand or with help of \emph{Prover9}) that the axioms in each of (i){--}(iv)
both imply those of the associated axiomatisation in 
Theorem~\ref{thm:indep}
that uses the axioms in \ref{thm:indep}.(i), \emph{and} are their consequences. 
This proves the mentioned completeness results.

The independence of the axiomatisations (iii) and (iv) follows quickly with \emph{Mace4},
and implies the independence of (i) and (ii), respectively.
\end{proof}

\begin{corollary}
$\MSCL\prec\CLSCLtwo$ and $\MSCLu\prec\CLSCLu$, i.e., for all terms $s,t$ over \SigSCL,
$\MSCL\vdash s=t ~\Longrightarrow~ \CLSCLtwo\vdash s=t$, and for all terms $s,t$ over \SigSCLu,
\\
\(\MSCLu\vdash s=t ~\Longrightarrow~ \CLSCLu\vdash s=t.\)
\end{corollary}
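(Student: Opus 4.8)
The plan is to obtain this as a bookkeeping consequence of the completeness results already in place. Recall (from~\cite{BPS21}, cited in Section~\ref{sec:propa}) that $\MSCL$ is axiomatised by the equational logic $\MSCLe$ of Table~\ref{tab:MSCLe} and $\MSCLu$ by $\MSCLeu=\MSCLe\cup\{\neg\und=\und\}$. By Theorem~\ref{thm:indep2}(ii) and (iv), $\CLtwo$ is completely axiomatised by $\CLe$ and $\CLthree$ by $\CLeu$, and inspection of Table~\ref{tab:CL2} shows $\MSCLe\subseteq\CLe$ and $\MSCLeu\subseteq\CLeu$ (each is obtained by adding axiom~\eqref{Com}). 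Hence $\CLtwo\vdash s=t$ whenever $\MSCLe\vdash s=t$, and $\CLthree\vdash s=t$ whenever $\MSCLeu\vdash s=t$.

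First I would treat the two-valued case: fix terms $s,t$ over $\SigSCL$ with $\MSCL\vdash s=t$. Then $\MSCLe\vdash s=t$ by completeness of $\MSCLe$ for $\MSCL$, hence $\CLe\vdash s=t$ since $\MSCLe\subseteq\CLe$, hence $\CLtwo\vdash s=t$ by Theorem~\ref{thm:indep2}(ii), hence $\CLSCL\vdash s=t$ by Theorem~\ref{thm:CLcomplete}. The three-valued case is identical, now reading $\SigSCLu$, $\MSCLu$, $\MSCLeu$, $\CLeu$, $\CLthree$, Theorem~\ref{thm:indep2}(iv) and Theorem~\ref{thm:CLcompleteu} in place of their two-valued counterparts. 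That both inclusions are moreover strict, as the notation $\prec$ suggests, is already witnessed by the equation $x\fulland y=y\fulland x$, which holds in $\CLSCL$ and $\CLSCLu$ but is refuted in $\MSCL$ and $\MSCLu$ (see the introduction).

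There is no genuine obstacle here: the only point requiring any care is that $\MSCL$ and $\MSCLu$ are defined semantically, as module expressions, rather than equationally, so one must invoke the results of~\cite{BPS21} that $\MSCLe$ and $\MSCLeu$ axiomatise them; given this, every step above is an immediate application of an axiom inclusion or of one of the Theorems~\ref{thm:CLcomplete}, \ref{thm:CLcompleteu}, \ref{thm:indep2}.
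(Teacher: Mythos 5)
Your proof is correct and follows essentially the same route as the paper: both reduce the claim to the axiom-set inclusions $\MSCLe\subseteq\CLe$ and $\MSCLeu\subseteq\CLeu$ combined with the completeness results for $\MSCL$/$\MSCLu$ and for $\CLSCL$/$\CLSCLu$. The only cosmetic difference is in the strictness part, where the paper invokes the independence of $\CLe$ (resp.\ $\CLeu$) from Theorem~\ref{thm:indep2} to conclude that the extra axiom \eqref{Com} is not a consequence of $\MSCLe$, while you appeal to the distinguishing example $x\fulland y=y\fulland x$ stated in the introduction; both are adequate.
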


\begin{proof}
\MSCLe\ axiomatises \MSCL\ and by Theorem~\ref{thm:indep2}.(ii), it is a proper subset of 
the independent axiomatisation \CLe\ of \CLSCLtwo. Hence, $\MSCL\prec\CLSCLtwo$.

\noindent
Similarly, \MSCLeu\ axiomatises \MSCLu\ and is by Theorem~\ref{thm:indep2}.(iv) a proper subset of
the independent axiomatisation \CLeu\ of \CLSCLu. Hence, $\MSCLu\prec\CLSCLu$.
\end{proof}

\section{Discussion and conclusions}
\label{sec:conc}

We studied conditional logic and provided for its variants with distinguished 
constants \tr\ and \fa, and with or without \und, a detailed explanation of its `basic forms',
and included these variants in our framework of short-circuit logics. 
In this section we address two questions: which logic is defined by conditional logic when restricting to 
the full left-sequential connectives and negation, and what does it mean to define SCLs 
without the constants \tr\ and \fa. We end with some conclusions and remarks on related and future work.

\paragraph{Full left-sequential evaluation is commutative in CL -- consequences.}
Let $\CLthree(\fulland,\fullor)$ denote the extension of \CLthree\ with the connectives ${\fulland}$ and ${\fullor}$
and their defining axioms 
\[x\fulland y=(x\leftand y)\leftor(y\leftand x)\quad\text{ and }\quad
x\fullor y=\neg(\neg x\fulland \neg y).
\]
Then $\CLthree(\fulland,\fullor)$ incorporates Bochvar's 
\emph{three-valued strict logic}~\cite{Boc38}.
In~\cite{BergstraBR1995}, this logic is denoted as \SB\ and
it is proved that \SB\ is completely axiomatised by the following axioms:~\footnote{%
  The missing axiom 
  (S5) reads $x\fullimp y=\neg x\fullor y$, but the connective $\fullimp$ (full left-sequential implication)
  is irrelevant here. 
  We note that the axioms (S1){--}(S4),(S6)--(S9),(S11) are independent (according to \emph{Mace4}) and 
  that (S10) is a consequence of (S3),(S4),(S6)--(S9) (according to Prover9).
   }
\\[-4mm]
\begin{minipage}[t]{0.46\linewidth}\centering
\begin{Lalign}
\tag{S1}
\neg\tr&=\fa
\\
\tag{S2}
\neg\und&=\und
\\
\tag{S3}
\neg\neg x&=x
\\
\tag{S4}
\neg(x\fulland y)&=\neg x\fullor \neg y
\\
\tag{S6}
(x\fulland y)\fulland z
&=x\fulland (y\fulland z)
\end{Lalign}
\end{minipage}
\begin{minipage}[t]{0.62\linewidth}\centering
\begin{Lalign}
\tag{S7}
\tr\fulland x
&=x
\\
\tag{S8}
x\fullor (\neg x\fulland y)
&=x\fullor  y
\\
\tag{S9}
x\fulland y
&= y\fulland x
\\
\tag{S10}
x\fulland(y\fullor z)
&=(x\fulland y)\fullor(x\fulland z)
\\
\tag{S11}
\und\fulland x
&=\und
\end{Lalign}
\end{minipage}
\\[2mm]
With \emph{Prover9} it quickly follows that these axioms are consequences of $\CLthree(\fulland,\fullor)$. 
Moreover, in $\CLtwo(\fulland,\fullor)$ and $\CL(\fulland,\fullor)$, the corresponding version of the
axiomatisation of \SB, i.e. 
\[\text{(S1), (S3), (S4), (S6){--}(S9) and (S3), (S4), (S8){--}(S10), respectively,}
\]
is also derivable (the latter set implies (S6)).
The two-valued variant of \SB, say \SBtwo, satisfies 
\[\SBtwo\prec \textup{ ``propositional logic with $\neg$, $\fulland$ and $\fullor$''}\]
because the absorption law $x\fulland (x\fullor y)=x$ and its commutative variants (and duals) do not hold 
in \SBtwo\ (a counter-model is quickly found by \emph{Mace4}).

\paragraph{Short-circuit logics without constants.}
Obviously, all two-valued SCLs can be defined without the constants \tr\ and \fa.
In the case of \FSCL, this significantly reduces its axiomatisation,
a complete set of axioms is $\neg\neg x=x$, $x\leftor y=\neg(\neg x\leftand\neg y)$,
and $(x\leftand y)\leftand z=x\leftand(y\leftand z)$ (as noted in~\cite{PS18}).
All other axioms of \FSCL\ involve (sub)terms with only \fa-leaves (or with only \tr-leaves), 
such as $x\leftand\fa=\neg x\leftand\fa$, and these cannot be expressed in \FSCL\ without \tr\ and \fa. 
Hence, the full left-sequential connectives are not definable in \FSCL\ without \tr\ and \fa.

Let $\MSCL_0$ be defined as \MSCL, but without the constants \tr\ and \fa\ (which are not definable).
This is similar to the way CL is set up, leaving open the choice to adjoin \tr\ and \fa, 
and~\und. 
From the axiomatisation of \CLtwo\ in Table~\ref{tab:CL2} (Theorem~\ref{thm:indep2}.(ii)), 
a complete axiomatisation of $\MSCL_0$, say $\MSCLe_0$, is obtained by omitting 
the axiom~\eqref{Com} from those listed in Theorem~\ref{thm:indep2}.(i).\footnote{%
  $\MSCLe\vdash s=t\Rightarrow 
  \MSCLe_0\vdash s=t$ follows by induction on the length of derivations.}
Obviously, $\MSCLe_0$ is independent because the original axiomatisation is.
Moreover, full left-sequential conjunction is definable in $\MSCL_0$ by
\[x\fulland y=(x\leftor (y\leftand\neg y))\leftand y
\quad\text{or}\quad
x\fulland y=(x\leftand y)\leftor(y\leftand x).\]
We conclude that $\MSCL_0$ compared to \MSCL\ imposes no constraints other 
than the undefinability of \tr\ and \fa.

Finally, we note that $\MSCL_0\vdash(x\leftor \neg x)\leftand x = x,~(\neg x\leftor x)\leftand x = x$, 
the equations that are the counterparts of the axiom $\tr\leftand x =x$~\eqref{Tand}, 
so the question is indeed what would
be missed when using $\MSCL_0$ ``in practice''.
We also note that omitting \eqref{g} from  each of the two axiomatisations of
\CLtwo\ in Theorem~\ref{thm:indep} does not yield a complete axiomatisation of $\MSCL_0$: 
\emph{Mace4} 
finds a counter-model for $(\neg x\leftor x)\leftand x = x$, and for
$x\leftor \neg x=\neg x\leftor x$.

For the case of $\MSCLu_0$ (with \und\ and the axiom $\neg\und=\und$), similar observations apply.

\paragraph{Conclusions.} 
Starting from Guzmán and Squier's conditional logic, we introduced the two-valued
short-circuit logic \CLSCLtwo,
axiomatised by \CLtwo, and three-valued \CLSCLu, axiomatised by \CLthree. Then we analysed the commutativity
of the full left-sequential connectives in \CLthree\ and found that these and negation
define Bochvar’s three-valued strict logic. 
Finally, we investigated the independence of the equational axiomatisations of \CL, and 
paid attention to short-circuit logic without constants.

We conclude with two remarks. 
1) Three equations that hold in $\MSCL_0$ and express properties related to falsehood are 
$x\leftand \neg x=\neg x\leftand x$ and $(x\leftand \neg x)\leftor x=x$ and $(x\leftand \neg x) \leftand y =x\leftand \neg x$.
Another equation that relates to falsehood and does not hold in \MSCL, but does hold in \CL, is
\[
(x\leftand\neg x)\leftor (y\leftand\neg y)=(y\leftand\neg y)\leftor (x\leftand\neg x).
\]
Furthermore, $x\leftand \neg x = y\leftand\neg y$ is \emph{not} a consequence of \CLtwo\ (by \emph{Mace4}).
These facts support our conjecture that there exists no short-circuit logic without additional constants
in between \CLtwo\ and \SSCL\
(i.e., the SCL that represents propositional logic in the signature $\{{\leftand}, {\leftor}, \neg, \tr, \fa\}$).
We note that adding the axiom $x\leftand \neg x=y\leftand\neg y$ to \CLtwo\ (or $(x\leftand \neg x) \leftor y =y$,
or $x\leftand\fa=\fa$) yields \SSCL\ (by \emph{Prover9}). Of course, with $a\in A$ 
and $|A|>1$, the ad-hoc SCL defined by $\CLtwo\cup\{(a\leftand\neg a)\leftor x=x\}$ is in between \CLtwo\ and \SSCL.

2) The equation $(x\leftand\und)\leftor\und=\und$, which follows immediately from \CLu\ and \CLthree, 
can be seen as a concise characterisation of the difference between \CLu\ and $\MSCLu_0$,
and of the difference between \CLthree\ and \MSCLu.

\paragraph{Related work.} 
In \cite{Stau}, Staudt introduced evaluation trees and thereby
axiomatised \emph{Free Fully Evaluated Logic} (FFEL), a sublogic of \FSCL\ that has (only) 
full left-sequential connectives (and, like \FSCL, is immune to side effects):
\\[-5mm]
\begin{minipage}[t]{0.54\linewidth}\centering
\begin{Lalign}
\tag{FEL1}
\fa&=\neg\tr
\\
\tag{FEL2}
x\fullor y&=\neg(\neg x\fulland \neg y)\\
\tag{FEL3}
\neg\neg x&=x
\\
\tag{FEL4}
(x\fulland y)\fulland z&=x\fulland(y\fulland z)
\\
\tag{FEL5}
\tr\fulland x
&=x
\end{Lalign}
\end{minipage}
\hspace{-8mm}\begin{minipage}[t]{0.58\linewidth}\centering
\begin{Lalign}
\tag{FEL6}
x\fulland\tr&=x
\\
\tag{FEL7}
x\fulland \fa
&=\fa\fulland x
\\
\tag{FEL8}
x\fulland \fa
&= \neg x\fulland \fa
\\
\tag{FEL9}
(x\fulland\fa)\fullor y
&=(x\fullor\tr)\fulland y
\\
\tag{FEL10}
x\fullor(y\fulland \fa)&=x\fulland (y\fullor\tr)
\end{Lalign}
\end{minipage}
\\[2mm] 
These axioms are consequences of \SBtwo\ (according to Prover9), but FFEL is obviously weaker: 
$x\fulland y = y\fulland x$ does not hold.
(Note: a completeness result for FSCL was proved in~\cite{Stau}.)

\paragraph{Future work.}
In~\cite{BPS21}, we mentioned the following attractive challenge:
\emph{Find 
a convincing example that distinguishes $\MSCLu$ from Conditional Logic as
defined in Guzmán and Squier (1990).}

Our current research on ``fracterm calculus for partial meadows'' may well prove to be such a convincing example. 
A partial meadow is a field with a partial division function for which division by 0 is undefined.
Fracterm calculus for partial meadows aims to formalise elementary arithmetic including division by defining a 
three-valued sequential first-order logic, on top of a three-valued short-circuit logic with sequential, 
noncommutative connectives. In the Tarski semantics for this first-order logic we have in mind, 
conditional logic is sound and thus seems the right candidate to start from.

\addcontentsline{toc}{section}{References}

\appendix

\addcontentsline{toc}{section}{Appendix: A normalisation function for mem-basic forms}
\section*{Appendix: A normalisation function for mem-basic forms}

The two normalisation functions \baf()\ for two-valued basic forms and \membf()\
for mem-basic forms, both defined in~\cite{BP17},
can be easily lifted to the tree-valued setting.
 
Define the basic form function $\baf:\PSu\to\BFu$ and, given $Q,R\in\BFu$, 
the auxiliary function $[\tr\mapsto Q, \fa\mapsto R]:\BFu\to\BFu$
for which postfix notation $P[\tr\mapsto Q, \fa\mapsto R]$ is used, as follows:
\begin{align*}
&\text{$\baf(B)=B$ ~for $B\in\{\tr,\fa,\und\}$,}
&&\tr[\tr\mapsto Q, \fa\mapsto R]=Q,
\\
&\text{$\baf(a)=\tr\lef a\rig\fa$ ~for $a\in A$,}
&&\fa[\tr\mapsto Q, \fa\mapsto R]=R,
\\
&\baf(P \lef Q\rig R)= \baf(Q)[\tr\mapsto \baf(P), \fa\mapsto \baf(R)],
&&\und[\tr\mapsto Q, \fa\mapsto R]=\und,
\end{align*}
and $(P_1\lef a\rig P_2)[\tr\mapsto Q, \fa\mapsto R]=
P_1[\tr\mapsto Q, \fa\mapsto R]\lef a\rig P_2[\tr\mapsto Q,\fa\mapsto R]$.
\\[2mm]
From~\cite[App.A8]{BPS21}, it follows that for all $P\in\PSu$,
$\CPu\vdash P=\baf(P)$,
and for all $P,Q\in\PSu$, $\CPu\vdash P=Q$ if, and only if, $\baf(P)=\baf(Q)$.

Next, define the mem-basic form function $\membf:\PSu\to\PSu$ by $\membf(P)=\memf(\baf(P))$, 
and the auxiliary function 
  $\memf:\BFu\to\BFu$ by
  $\memf(B)=B$ ~for $B\in\{\tr,\fa,\und\}$
  and 
\(
\memf(P\lef a\rig Q)=\memf(\ell_a(P))\lef a\rig\memf(\ri_a(Q)),
\)
with $\ell_a()$ and $\ri_a()$ as in Definition~\ref{def:aux}.

It follows easily that $\membf()$ is a normalisation function yielding mem-basic forms:
the extension to \PSu\ of Theorem~5.9 in~\cite{BP17}, i.e.
\[\text{For all $P\in\PSu, \CPmem \vdash P = \membf(P)$,}\]
and of the supporting lemmas is trivial: all inductive proofs in~\cite{BP17}
require one additional, trivial base case (for \und).

\end{document}